\newtheorem{theorem}{Theorem}
\newtheorem{proposition}{Proposition}
\newtheorem{corollary}{Corollary}
\newcommand{\beq}{\begin{equation}}
\newcommand{\eeq}{\end{equation}}
\newcommand{\bqa}{\begin{eqnarray}}
\newcommand{\eqa}{\end{eqnarray}}
\newcommand{\nn}{\nonumber}
\newcommand{\smallfrac}[2]{\mbox{$\frac{#1}{#2}$}}
\newcommand{\half}{\smallfrac{1}{2}}
\newcommand{\sq}[1]{\left[ {#1} \right]}
\newcommand{\tr}[1]{{\rm Tr}\sq{ {#1} }}
\newcommand{\id}{\mathbbm{1}}
\newcommand{\h}{\mathcal{H}}
\newcommand{\p}{\mathcal{P}}
\newcommand{\ens}{\mathcal{E}}
\newcommand{\blk}{\color{black}}
\definecolor{maroon}{rgb}{0.7,0,0}
\definecolor{ngreen}{rgb}{0.3,0.7,0.3}
\definecolor{golden}{rgb}{0.8,0.6,0.1}
\begin{document}
\title{
Asymmetry and tighter uncertainty relations for R\'enyi entropies via quantum--classical decompositions of resource measures}
\author{Michael J. W. Hall}
\affiliation{Department of Theoretical Physics, Research School of Physics, Australian National University, Canberra, ACT 0200, Australia}

\begin{abstract}
It is known that the variance and entropy of quantum observables decompose into intrinsically quantum and classical contributions. Here a general method of constructing quantum--classical decompositions of resources such as uncertainty is discussed, with the quantum contribution specified by a measure of the noncommutativity of a given set of operators relative to the quantum state, and the classical contribution generated by the mixedness of the state. Suitable measures of noncommutativity or `quantumness' include quantum  Fisher information, and the asymmetry of a given set, group or algebra of operators, and are  generalised to nonprojective observables and quantum channels. Strong entropic uncertainty relations and lower bounds for R\'enyi entropies are  obtained, valid for arbitrary discrete observables, that take the mixedness of the state into account via a classical contribution to the lower bound.  These relations can also be interpreted without reference to quantum-classical decompositions, as tradeoff relations that bound the asymmetry of one observable in terms of the entropy of another.
\end{abstract}

\maketitle

\section{Introduction}

A quantum observable such as position or spin has two generic sources of uncertainty. The first arises when the observable does not commute with the state of the quantum system: this implies that the state is not an eigenstate of the observable, leading to a spread of measurement outcomes. The second source of uncertainty arises when the state is mixed: a loss of classical information due to mixing reduces the predictability of the measurement outcome, leading to an increased spread.  It follows that the two sources are,  respectively, intrinsically quantum and classical in nature.  The first can be useful as a resource, e.g., in allowing a complementary observable to have a small spread, whereas the second is usually not, as mixing contributes noise to all observables.

Luo suggested quantifying such quantum and classical contributions to a given measure of uncertainty, $M(X|\rho)$, for a Hermitian operator $X$ and state described by density operator $\rho$,  via a quantum-classical decomposition of the form~\cite{Luo2005,LuoPRA2005} 
\beq \label{decomp}
M(X|\rho) = Q(X|\rho) + C(X|\rho) ,
\eeq
and gave the particular example 
\beq \label{skewex}
{\rm Var}_\rho(X) = -\frac12\tr{ [X,\sqrt{\rho}]^2} + \left(\tr{X\sqrt{\rho}X\sqrt{\rho}} - \langle X\rangle_\rho^2\right)
\eeq
for the variance of a Hermitian operator. In this example the `quantum' contribution is the skew information of $X$ with respect to $\rho$~\cite{WYD}, and is clearly nonzero only if the observable does not commute with the state, whereas the `classical' contribution is nonzero only if the state is mixed, as expected. A second example is a quantum--classical decomposition of the entropy of a Hermitian operator, given by Korzekwa~{\it et al.}~\cite{kor}.

A simple approach to unifying and generalising such quantum--classical decompositions is discussed in Sec.~\ref{sec:decomp}, based on the idea that the intrinsically quantum contribution to a given resource measure  is maximised when the observer has access to a pure state of the system, and is degraded by classical mixing for a local observer who only has access to a component of such a system. For example, a maximally-entangled pure state of two qubits is a useful quantum resource for dense coding and for estimation of a local rotation, but this usefulness vanishes for an observer who only has access to one of the two qubits (described by a maximally-mixed state).  

The general approach is described in Sec.~\ref{sec:decompgen}. It starts with some given measure $Q(X,Y,\dots|\rho)$ of the `quantumness' of a resource, for operators $X,Y,\dots$ and state $\rho$. The corresponding  `maximum potential' $M(X,Y,\dots|\rho)$ of the resource and its `classicality' $C(X,Y,\dots|\rho)$ are then constructed so as to satisfy a decomposition analogous to Eq.~(\ref{decomp}). Several examples are given in Sec.~\ref{sec:decompex}, including the decomposition of variance and covariance matrices with respect to quantum Fisher information; decompositions of Shannon entropy with respect to both asymmetry and the conditional entropy of self-dual communication channels; and, of particular relevance to this paper, the decomposition of R\'enyi entropy with respect to R\'enyi asymmetry, where the latter has applications to quantum coherence, time-energy uncertainty relations, quantum information, open quantum systems and quantum metrology ~\cite{lara,ColesPRL,Hall2022,Chitambar,Gao2020}. It is also shown that measures of quantumness can equivalently be defined via sets, groups or algebras of operators, and generalised to  arbitrary discrete observables and to quantum channels. An alternative approach, based on convex and concave roofs, is discussed in Appendix~\ref{appa}, with corresponding examples.

The main technical results are given in Sec.~\ref{sec:uncert}, including a strong uncertainty relation for R\'enyi entropies of discrete  observables $X$ and $Y$ represented by positive operator valued measures (POVMs) $\{X_x\}$ and $\{Y_y\}$:
\begin{align}  \label{uncertrenyi}
H_\alpha(X|\rho)+H_\beta(Y|\rho) &\geq -\log \mu_{XY}\nn\\&~\,~~+\max\{C_\alpha(Y|\rho),C_\beta(X|\rho)\} .
\end{align}
Here the R\'enyi parameters $\alpha,\beta\in[\half,\infty)$ are related by $\alpha^{-1}+\beta^{-1}=2$; $\mu_{XY}$ denotes the maximum eigenvalue of $X_x^{1/2}Y_yX_x^{1/2}$ over $x$ and $y$; and the last term takes the mixedness of $\rho$ into account via the classical components of the R\'enyi entropies. The logarithm base is left arbitrary throughout, corresponding to a choice of units (e.g., to bits for base 2 and to nats for base $e$). 

 Equation~(\ref{uncertrenyi}) is clearly stronger than the standard uncertainty relation
\beq \label{renknown}
H_\alpha(X|\rho)+H_\beta(Y|\rho) \geq -\log \mu_{XY} ,
\eeq
for  R\'enyi entropies~\cite{maas,Para,Rastegin2010}, which has no classical mixing term. Further, for   
rank-1 POVMs, with $X_x=|x\rangle\langle x|$ and $Y_y=|y\rangle\langle y|$,  Eq.~(\ref{uncertrenyi}) generalises the known uncertainty relation
\beq \label{nondegen}
H(X|\rho) + H(Y|\rho) \geq -\log \max_{x,y} |\langle x|y\rangle|^2+ H(\rho)
\eeq
for Shannon entropies~\cite{kor,Berta,Coles2011} (which correspond to $\alpha=\beta=1$), that takes the mixedness of the state into account via its von Neumann entropy, $H(\rho):=-\tr{\rho\log\rho}$.   It is also shown in Sec.~\ref{sec:uncert} that Eq.~(\ref{uncertrenyi}) may be written as a direct tradeoff between the asymmetry of one observable and the entropy of another, without any reference to quantum-classical decompositions, leading to a simple lower bound for R\'enyi asymmetry, as well as to entropic bounds such as
\beq \label{rhoxbound}
H_\alpha(X|\rho) \geq -\log\mu_{\rho X},\qquad \alpha\geq\half ,
\eeq
for the R\'enyi entropy of an arbitrary discrete observable $X$.   Some readers may choose to skip directly to Sec.~\ref{sec:uncert}, and refer back to Sec.~\ref{sec:decomp} for definitions.

Results and possible future work are discussed in Sec.~\ref{sec:con}.

\section{Building quantum--classical decompositions }
\label{sec:decomp}

\subsection{A general approach}
\label{sec:decompgen}


As noted in the Introduction, access to a maximally-entangled pure state of two qubits is a useful resource for the  estimation of a local rotation, but this usefulness vanishes for an observer who only has access to one of the two qubits, described by a maximally-mixed state. This motivates the more general 
idea that the quantum contribution to the degree to which a given resource can be exploited is maximised for a notional observer who has access to a pure state of the system, with any classical component arising from the presence of mixing when there is only partial access. 

Accordingly, if the `quantumness' of some resource is quantified by a measure $Q(X,Y,\dots|\rho)$, for some set of operators $X,Y,\dots$ and state $\rho$, the  maximum potential of the resource, $M(X,Y,\dots|\rho)$, is defined by
\beq \label{ult}
M(X,Y,\dots|\rho) := Q(X\otimes\id_a,Y\otimes \id_a,\dots|\rho_\psi),
\eeq
where $\rho_\psi\equiv|\psi\rangle\langle\psi|$ is a purification of state $\rho$ on the tensor product  $\h\otimes \h_a$ of the system Hilbert space $\cal H$ with an ancillary Hilbert space $\h_a$~\cite{nielsen}. This immediately yields a corresponding quantum-classical decomposition of the form
\begin{align} \label{decompgen}
	M(X,Y,\dots|\rho) = Q(X,Y,\dots|\rho) + C(X,Y,\dots|\rho) ,
\end{align}
generalising Eq.~(\ref{decomp}), with $C(X,Y,\dots|\rho) := Q(X\otimes\id_a,Y\otimes \id_a,\dots|\rho_\psi) - Q(X,Y,\dots|\rho)$ representing the classical component of the resource relative to an observer who only has access to the mixed state $\rho$ of the system (rather than to the purified state $\rho_\psi$).

Several examples and generalisations are given  below, but first three natural assumptions for  $Q(X,Y,\dots|\rho)$ are identified (related requirements have been previously discussed by Luo~\cite{Luo2005,LuoPRA2005} and Korzekwa {\it et al.}~\cite{kor}).
\begin{itemize}
		\item[(i)] It will be assumed that there is no potential `quantum' resource to exploit if the relevant operators commute with the state of the system, i.e.,  
		\beq \label{q0}
	\qquad	Q(X,Y,\dots|\rho) =0 ~~{\rm if}~~ [X,\rho]=[Y,\rho]=\dots=0 .
		\eeq
	Hence, the quantumness acts as a measure of noncommutativity. 
	\item[(ii)] 
	The decomposition must be well-defined, with the values of the quantum and classical components being independent of the choice of purification. This is guaranteed by the assumption that $Q(X,Y,\dots|\rho)$ is invariant under  unitary transformations, i.e., 
	\beq \label{assump1}
\qquad	Q(X,Y,\dots|\rho) = Q(UXU^\dagger,UYU^\dagger,\dots|U\rho U^\dagger) 
	\eeq
	for all unitary transformations $U$. In particular, noting that any two purifications $\rho_\psi$ and $\rho_{\psi'}$ can be related by a local unitary transformation $U=\id\otimes U_a$ acting on the tensor product  $\h\otimes \h_a$~\cite{nielsen}, where this leaves operators on $\h$ invariant, it follows that Eqs.~(\ref{ult}) and Eq.~(\ref{decompgen}) are independent of the choice of purification as required. 
	\item[(iii)] 
	The classical component must be nonnegative, and vanish for pure states (i.e., when there is no classical mixing). This is equivalent, via  Eqs.~(\ref{ult}) and~(\ref{decompgen}),  to the assumption that the quantumness increases under purification (e.g., when an observer gains access to the full quantum system), i.e., that
	\beq \label{assump2}
	\qquad Q(X,Y,\dots|\rho) \leq Q(X\otimes\id_a,Y\otimes \id_a,\dots|\rho_\psi)
	\eeq
	for any purification $\rho_\psi$ of $\rho$. In particular, this assumption ensures that $C(X,Y,\dots|\rho)\geq0$, with equality for pure states, as desired.
\end{itemize}

The main advantage of the above construction method is that the maximum potential value of a quantum resource accessible to an observer, $M$, and the classical contribution due to mixing, $C$, are fully determined by the choice of the quantumness (or noncommutativity) measure $Q$. Other advantages are the applicability to arbitrary sets of operators and the straightforward extension to nonprojective observables and quantum channels, as illustrated in the examples below. An alternative  method of constructing quantum-classical decompositions, based on convex and concave roofs of pure-state resource measures, is discussed in Appendix~\ref{appa}.

\subsection{Examples}
\label{sec:decompex}

\blk 
It is convenient in what follows to call a given discrete observable $X$ a {\it projective}  observable if the corresponding POVM $\{X_x\}$ consists of orthogonal projections, i.e., $X_xX_{x'}=\delta_{xx'}X_x$, and a  {\it nonprojective} observable otherwise.  Note that a projective observable $X$ is equivalently represented by the  Hermitian operator $X\equiv\sum_x xX_x$. The distinction between a projective observable and its corresponding Hermitian operator will always be clear by context.
\blk

\subsubsection{Skew information, variance, and the \\Fisher information matrix}
\label{sec:skew}

Luo's decomposition of the variance of a projective observable into quantum and classical contributions~\cite{Luo2005,LuoPRA2005}, 
\beq \label{skewdecomp}
{\rm Var}_\rho(X)= M_{\rm skew}(X|\rho)= Q_{\rm skew}(X|\rho)+C_{\rm skew}(X|\rho)
	\eeq
 as per Eq.~(\ref{skewex}), corresponds to choosing the quantumness measure to be the skew information of the observable with respect to the state~\cite{WYD}, i.e.,
\beq \label{qskew}
Q_{\rm skew}(X|\rho):=\frac12\tr{ (i[X,\sqrt{\rho}])^2}  .
\eeq
The skew information may be physically interpreted as, for example, a measure of the information content of an ensemble that conserves $X$~\cite{WYD};  a variant of quantum Fisher information in quantum metrology~\cite{Luo2005,LuoPRL,Petz};  a measure of asymmetry~\cite{Marvian2014}; or (for nondegenerate observables) as a measure of coherence~\cite{Girolami2014, Luo2017}.

Note that assumptions~~(\ref{q0}) and~(\ref{assump1}) are trivially satisfied by the skew information, while assumption~(\ref{assump2}) follows because $Q_{\rm skew}(X\otimes\id_a|\rho_\psi)={\rm Var}_\rho(X)$ for any purification of $\rho$, implying via identity~(\ref{skewex}) and Eq.~(\ref{skewdecomp}) that
\beq \label{cskew}
C_{\rm skew}(X|\rho) = \tr{\left(\rho^{1/4}(X-\langle X\rangle_\rho )\rho^{1/4}\right)^2}\geq 0 .
\eeq
Hence, properties~(i)--(iii) of Sec.~\ref{sec:decompgen} all hold as required. 

The above decomposition can be generalised to the case of more than one projective observable, and to other measures of quantum Fisher information, using recent results by Kudo and Tajima~\cite{Kudo2022}. In particular, let $Q^f_F(X_1,X_2,\dots|\rho)$ denote the quantum Fisher information matrix corresponding to a given monotone metric function $f$, for some set of projective observables $X_1, X_2,\dots$ and density operator $\rho$, where each choice of $f$ corresponds to a measure of the sensitivity of the state to unitary transformations generated by these observables~\cite{Petz, Kudo2022}. Assumptions~(\ref{q0})--(\ref{assump2}) of Sec.~\ref{sec:decompgen} hold for $Q^f_F$ as a direct consequence of Theorem~9 of~\cite{Kudo2022}, and the construction method yields the quantum-classical decomposition 
\beq \label{covdecomp}
M(X_1,X_2,\dots|\rho) ={\rm Cov}_\rho(X_1,X_2\dots|\rho)= Q^f_F+C^f_F
\eeq
of the symmetrised covariance matrix, with coefficients defined by
\begin{align}
[{\rm Cov}_\rho(X_1,X_2,\dots|\rho)]_{jk}&:= \half\langle X_jX_k+X_kX_j\rangle_\rho\nn\\
&\qquad~~~-\langle X_j\rangle_\rho\langle X_k\rangle_\rho .
	\end{align}
Thus, the covariance matrix has an infinite family of quantum-classical decompositions, with Eqs.~(\ref{skewdecomp})--(\ref{cskew}) providing just one particular example. It is shown in Appendix~\ref{appa} that, for the case of a single projective observable, the alternative construction method given there  picks out the particular decomposition having the minimal Fisher information (corresponding to the symmetric logarithmic derivative~\cite{Petz}).

\subsubsection{Distance, Shannon entropy and standard asymmetry}
\label{sec:selfdual}

Shannon entropy is a well known resource measure in various information contexts, and Korzekwa {\it et al.} have given a corresponding quantum-classical decomposition for the case of projective observables~\cite{kor}, as briefly described below. An alternative decomposition of Shannon entropy is noted in Appendix~\ref{appa:selfdual}, based on information properties of self-dual quantum communication channels, \blk i.e., communication channels which are invariant under a duality mapping between the signal ensemble and the receiver measurement~\cite{Hugh93, Hall1997}. \blk

The decomposition in ~\cite{kor} corresponds to choosing the quantumness to be a particular measure of the `distance' from the state of the system to a set of postmeasurement states. More generally, let $d(\rho,\sigma)\geq 0$ be a positive function that vanishes for $\rho=\sigma$, and define the corresponding measure of quantumness for a projective observable $X$ with associated projection valued measure $\{X_x\}$ via
\beq \label{qd}
Q_d(X|\rho):=\inf_\sigma d(\rho,\sigma_X) = \inf_{\sigma:[\sigma,X]=0} d(\rho,\sigma),
\eeq
where $\sigma$ is implicitly restricted to the set of density operators of the system and 
\beq \label{sigmax}
\sigma_X:= \sum X_x \sigma X_x 
\eeq
is the postmeasurement state for a projective measurement of $X$ on state $\sigma$. Such distance-based measures of quantumness or noncommutativity are commonly used to quantify asymmetry and coherence resources~\cite{Marvian2014,coh}. 

Note that $Q_d$ satisfies assumption~(\ref{q0}) of Sec.~\ref{sec:decompgen} by construction,  while assumptions~(\ref{assump1}) and~(\ref{assump2}) are satisfied if $d(\rho,\sigma)$ is invariant under unitary transformations and is
nonincreasing under the partial trace operation---and in
particular if $d(\rho,\sigma)$ is nonincreasing under general completely positive trace preserving (CPTP) maps~\cite{nielsen}. These properties hold in the case of the relative entropy distance function~\cite{nielsen}, defined by
\beq \label{d1}
d(\rho,\sigma)= D_1(\rho\|\sigma):=\tr{\rho(\log\rho-\log\sigma)} ,
\eeq
for which $Q_d$ evaluates to the standard measure of asymmetry~\cite{vacc1,vacc2},
\beq \label{q1}
	Q_1(X|\rho) = D_1(\rho\|\rho_X) = H(\rho_X) - H(\rho) ,
\eeq
using the identity $D_1(\rho\|\sigma_X)=H(\rho_X)-H(\rho)+D_1(\rho_X\|\sigma_X)$. 
Here the subscript `1' provides the basis for a generalisation to R\'enyi asymmetry in the next example. The standard asymmetry $Q_1(X|\rho)$ is equal to the increase in system entropy due to a nonselective measurement of $X$, and is a useful quantum resource measure in various contexts~\cite{Marvian2014,vacc1, vacc2,Gour,coh, kor,HallPRX,aberg} (these contexts typically have different sets of `free' states and operations, but the resource measure itself has the same form).

Korzekwa~{\it et al.} simply postulate that the standard asymmetry, $Q_1$, is the intrinsically quantum contribution to the Shannon entropy of the observable,  $H(X|\rho):=-\sum_x p(x|\rho)\log p(x|\rho)$~\cite{kor}. In contrast, an advantage of the general approach in Sec.~\ref{sec:decompgen} is that the link to Shannon entropy is derived rather than postulated.  In particular,  the maximum potential of the standard asymmetry, available to an observer with access to a purification $|\psi\rangle\langle\psi|$ of $\rho$, follows from definition~(\ref{ult}) and Eq.~(\ref{q1}) as
\begin{align}
M_1(X|\rho) &= H\big(\sum_x X_x\otimes\id_a|\psi\rangle\langle\psi|\ X_x\otimes\id_a\big) - H(|\psi\rangle\langle\psi|) \nn\\
&=H\big(\sum_x\langle\psi|X_x\otimes \id_a|\psi\rangle |\psi_x\rangle\langle\psi_x|\big) \nn\\
&=H(X|\rho),
\end{align} 
where $\{|\psi_x\rangle\}$ is the set of orthonormal states with $|\psi_x\rangle$ proportional to $(X_x\otimes\id_a)|\psi\rangle$, and the last line follows via $\langle\psi|X_x\otimes \id_a|\psi\rangle=\tr{|\psi\rangle\langle\psi| X_x\otimes\id_a}=\tr{\rho X_x}=p(x|\rho)$. Hence, the approach in Sec.~\ref{sec:decompgen} constructively generates the 
quantum-classical decomposition 
\begin{align} \label{u1}
H(X|\rho) = Q_1(X|\rho)+ C_1(X|\rho)
\end{align}
given in~\cite{kor}, with the classical contribution evaluating to
\begin{align} \label{c1}
	C_1(X|\rho) &= H(\rho) - \sum_x p(x|\rho) H(X_x\rho X_x/p(x|\rho)).
	\end{align}
Note for any rank-1 projective observable $X$, with $X_x\equiv|x\rangle\langle x|$, that the classical contribution reduces to the von Neumann entropy $H(\rho)$~\cite{kor}. More generally, if $H(X|\rho)$  is identified with the entropy of the classical measurement record, then the classical contribution equals the average decrease of the system entropy due to making such a record.

Korzekwa {\it et al.} use decomposition~(\ref{u1}) to obtain a simple proof of entropic uncertainty relation~(\ref{nondegen}) for rank-1 projective observables~\cite{kor}. The decomposition, uncertainty relation and method of proof are generalised below to R\'enyi entropies and  to nonprojective observables. A rather different decomposition of Shannon entropy, based on a quantumness measure for self-dual communication channels, is discussed in Appendix~\ref{appa:selfdual}.

\subsubsection{R\'enyi asymmetry and R\'enyi entropy}
\label{sec:asymm}

A particular choice of interest for the distance function $d(\rho,\sigma)$ in Eq.~(\ref{qd}) is the quantum generalisation of the classical R\'enyi relative entropy~\cite{Renyi} to the sandwiched R\'enyi divergence~\cite{Muller2013, Wilde2014},
\beq \label{dalpha}
d(\rho,\sigma) = D_\alpha(\rho\|\sigma) := \frac{1}{\alpha-1} \log \tr{\big(\sigma^{\frac{1-\alpha}{2\alpha}}\rho \sigma^{\frac{1-\alpha}{2\alpha}}\big)^\alpha}, 
\eeq
where the index $\alpha$ ranges over $[0,\infty)$. This generalises the relative entropy function in Eq.~(\ref{d1}) (which corresponds to the limit $\alpha\rightarrow1$), and introductory expositions of its basic properties may be found, for example, in~\cite{Muller2013,Beigi2013}. The most important of these properties for the purposes of this paper is that it is nonincreasing under CPTP maps when $\alpha\geq\half$, i.e, the data-processing inequality
\beq \label{data}
D_\alpha(\phi(\rho)\|\phi(\sigma))\leq D_\alpha(\rho\|\sigma),\qquad \alpha\geq\half,
\eeq
holds for any CPTP map $\phi$~\cite{Beigi2013,Lieb}.

For projective observables  the above choice of $d(x,y)$ generates the corresponding quantumness measure~\cite{lara}
\beq \label{renq}
Q_\alpha(X|\rho) := \inf_{\sigma:[\sigma,X]=0} D_\alpha(\rho\|\sigma) = \inf_{\sigma} D_\alpha(\rho\|\sigma_X) 
\eeq
via~Eq.~(\ref{qd}). This measure generalises the standard asymmetry in Eq.~(\ref{q1}), and is known as the R\'enyi asymmetry of $X$ with respect to $\rho$. It has found applications in the areas of quantum coherence, quantum information,  time-energy uncertainty relations, quantum metrology and open quantum systems~\cite{lara,ColesPRL,Hall2022,Chitambar,Gao2020}.  

Assumptions~(\ref{q0})--(\ref{assump2}) hold for the R\'enyi asymmetry $Q_\alpha(X|\rho)$ when $\alpha\geq\half$ as a consequence of data-processing inequality~(\ref{data}), as may easily be checked. Further, the maximum potential of the R\'enyi asymmetry is directly related to the R\'enyi entropy~\cite{Renyi}
\beq \label{rendef}
H_\beta(X|\rho):= \frac{1}{1-\beta}\log \sum_xp(x|\rho)^\beta
\eeq
of observable $X$ for state $\rho$, via 
\beq \label{urenyi}
M_\alpha(X|\rho)=Q_\alpha(X\otimes\id_a|\rho_\psi)=H_\beta(X|\rho),~~\frac{1}{\alpha}+\frac{1}{\beta}=2 .
\eeq
Here the first equality follows via definition~(\ref{ult}) and the second equality by  showing,  via direct minimisation in Eq.~(\ref{renq}), that  $Q_\alpha(X|\,|\psi\rangle\langle\psi|)=H_\beta(|\psi\rangle\langle\psi|_X)$ for any pure state $|\psi\rangle$, from which $M_\alpha(X|\rho)=H_\beta((\rho_\psi)_{X\otimes\id_a})=H_\beta(X|\rho)$ follows by direct calculation (see also the proof of Theorem~3 in~\cite{Hall2022}). Hence,  the general
approach in Sec.~\ref{sec:decompgen}) yields the  quantum-classical decomposition
\beq \label{renyidecomp}
 H_\beta(X|\rho) = Q_\alpha(X|\rho) + C_\alpha(X|\rho), ~~\frac{1}{\alpha}+\frac{1}{\beta}=2,
\eeq
of R\'enyi entropy, generalising the case of Shannon entropy in Eq.~(\ref{u1}) (which corresponds to the limit $\alpha\rightarrow1$). It follows that the R\'enyi asymmetry $Q_\alpha(X|\rho)$ is upper bounded by the R\'enyi entropy $H_\beta(X|\rho)$, with equality for pure states~\cite{Hall2022}. In contrast, the classical component $C_\alpha(X|\rho)$ vanishes for pure states by construction, as befits a measure of mixedness, and reaches the upper bound of $H_\beta(X|\rho)$ when $X$ is `classical' with respect to $\rho$, i.e., when $[X,\rho]=0$. Decomposition~(\ref{renyidecomp}) will be used in Sec.~\ref{sec:uncert} to obtain strong uncertainty relation~(\ref{uncertrenyi}) for R\'enyi entropies.

\subsubsection{Generalising to nonprojective observables}
\label{sec:nonproj}

The previous examples are restricted to projective observables. However, it is not difficult to generalise to arbitrary observables, as required for the general results in Sec.~\ref{sec:uncert}.

The idea is to exploit Naimark's extension theorem, that an observable $X$ with POVM $\{ X_x\}$ on Hilbert space $\cal H$, can always be extended to a projective observable $\tilde X$ with projection valued measure $\{\tilde X_x$\} on a larger Hilbert space $\tilde{\cal H}$, with
\beq \label{naimark}
X_x = P \tilde X_x P,
\eeq
where $P$ denotes the projection from $\tilde{\cal H}$ onto $\cal H$~\cite{nai,hol,peres,footnai, foottrine}.  
It follows that any given measure of quantumness $Q(X|\rho)$ for projective observables can be extended to nonprojective observables via (noting $\tilde \rho\equiv\rho$ on $\tilde{\cal H}$ since $P\tilde\rho P=\rho$)
\beq \label{extension}
Q(X|\rho) := Q(\tilde X|\tilde\rho)\equiv Q(\tilde X|\rho) .
\eeq
While this definition will typically depend on the choice of extension mapping $X\rightarrow \tilde X$, the results obtained in this paper are valid for any choice. Hence, the mapping can and will be left unspecified in what follows. 

As a simple example, the extension of the skew information in Eq.~(\ref{qskew}) to general observables follows, using $P\rho P=\rho$ and $\tilde X_x\tilde X_{x'}=\delta_{xx'}\tilde X_x$, as
\beq
Q_{\rm skew}(X|\rho) = \frac12\tr{ (i[\,\overline{X},\sqrt{\rho}])^2} +
\tr{\rho (\overline{X^2} -\overline{X}^2)} ,
\eeq
 in terms of the moment operators $\overline{X^m}:=\sum_x x^m X_x$. This is independent of the mapping $X_x\rightarrow \tilde X_x$ and reduces to Eq.~(\ref{qskew}) for projective observables. Note that the second term is typically nonzero for nonprojective observables, corresponding to an additional information resource if the extended observable is physically accessible. Remarkably, the maximum potential of the resource is  given by the variance for both projective and nonprojective observables, with definition~(\ref{ult}) leading to 
 \beq
 M_{\rm skew}(X|\rho)={\rm Var}_\rho(X)=\langle \overline{X^2}\rangle_\rho -\langle \overline X\rangle_\rho^2  .
 \eeq
Thus, the general decomposition of variance has the same form as Eq.~(\ref{skewdecomp}), with $X$ replaced by $\bar X$ in Eq.~(\ref{cskew}).

As a second example, the R\'enyi asymmetry in Eq.~(\ref{renq}) generalises to
\beq \label{rengen}
Q_\alpha(X|\rho) = \inf_{\sigma:[\sigma,\tilde X]=0} D_\alpha(\rho\|\sigma) =\inf_\sigma D_\alpha(\rho\|\sigma_{\tilde X}) 
\eeq
where, as always, $\sigma$ is implicitly restricted to range over the density operators of the system (hence, $P\sigma P=\sigma$). Thus, the asymmetry quantifies the distance from the state of the system to the set of postmeasurement states following a measurement of $\tilde X$. Such postmeasurement states do not typically lie in the Hilbert space $\cal H$ of the system, and hence the asymmetry will typically be nonvanishing for nonprojective observables (see also Corollary~\ref{corinsert}), corresponding to the sensitivity of $\rho$ to unitary displacements generated by $\tilde X$ on the extended Hilbert space. If such displacements are not physically accessible then the generalised asymmetry might more appropriately be referred to as hidden asymmetry (or perhaps super-asymmetry --- bazinga!).  

Note from Eq.~(\ref{naimark}) and the property $P\rho P=\rho$ that ${\rm Tr}[\rho \tilde X_x]=\tr{\rho X_x}$, implying that $H_\beta(\tilde X|\rho)=H_\beta(X|\rho)$. It follows immediately that Eq.~(\ref{rengen}) generates a quantum-classical decomposition of R\'enyi entropy having precisely the same form as Eq.~(\ref{renyidecomp}) for projective observables. Further, it is straightforward to show, using definition~(\ref{rengen}) and $P|\psi\rangle=|\psi\rangle$, that for a pure state $\rho=|\psi\rangle\langle\psi|$ the generalised asymmetry is given by 
\beq \label{asymmgen}
Q_\alpha(X|\,|\psi\rangle\langle\psi|)=H_\beta(X|\,|\psi\rangle\langle\psi|),~~~\frac{1}{\alpha}+\frac{1}{\beta}=2,
\eeq
similarly to the case of projective observables. Hence,  Eq.~(\ref{renyidecomp}) for the general case implies that the classical contribution to the R\'enyi entropy vanishes for pure states for both projective and nonprojective observables, i.e.
\beq \label{classgen}
C_\alpha(X|\,|\psi\rangle\langle\psi|) = 0,\qquad \alpha\geq \half.
\eeq

Finally, it is of interest to note that the general definition of asymmetry in Eq.~(\ref{rengen}) reduces to Eq.~(\ref{renq}) for the case of projective observables, irrespective of the choice of extension mapping $X_x\rightarrow \tilde X_x$. In particular, if $PEP$ is a projection for two projections $P$ and $E$, then writing $P^\perp =\id-P$ one has $(PEP^\perp )(PEP^\perp )^\dagger =PEP-(PEP)^2=0$, and so $PEP^\perp =0=P^\perp EP$, yielding
$E=(P+P^\perp )E(P+P^\perp )=PEP+P^\perp EP^\perp $,  from which $[P,E]=0$ follows. Hence, if $X$ is a projective observable then $[P,\tilde X_x]=0$ via Eq.~(\ref{naimark}). But $\rho$ and $\sigma$ in Eq.~(\ref{rengen}) similarly commute with $P$ (since $P\rho P= \rho$ and $P\sigma P=\sigma$), and evaluation of the right hand side then leads directly to Eq.~(\ref{renq}) as claimed.

\subsubsection{Asymmetry of sets vs groups vs algebras vs channels}
\label{sec:chann}

It is straighforward to generalise the class of distance-based quantumness measures in Eq.~(\ref{qd}) from the case of a single projective observable $X$ to any set $S$ of bounded operators, via
\beq \label{qds}
Q_d(S|\rho) := \inf_{\sigma: [\sigma,X]=0\,\forall X\in S} d(\rho,\sigma) = \inf_{\sigma\in S'} d(\rho,\sigma),
\eeq
where $\sigma$ again is implicitly restricted to the set of density operators of the system and  
\beq 
S':=\{Y: [X,Y]=0 ~ \forall X\in S\}
\eeq 
denotes the commutant of $S$, i.e., the set of operators that commute with all members of $S$. This generalised measure is, therefore, the distance from the state of the system to the set of states that commute with the operators in $S$.  Such measures are commonly used to quantify asymmetry resources, i.e., the degree to which the state is noninvariant with respect to members of $S$, when $S$ is either (i)~a unitary representation of some group~\cite{vacc1,vacc2,Gour,Marvian2014, HallPRX,ColesPRL, Hall2022}, or (ii)~an operator algebra~\cite{lara,Gao2020,Renner}.

Remarkably, while the cases of arbitrary sets, unitary group representations and operator algebras may appear to represent {\it prima facie} significant distinctions, this is in fact not so, provided that either $S$ or $S'$ in Eq.~(\ref{qds}) is closed under the adjoint operation (as is usually the case in practice~\cite{vacc1,vacc2,Gour,Marvian2014,lara,HallPRX,ColesPRL,Renner, Gao2020,Hall2022}).  For example, the standard asymmetry~(\ref{q1}) of a projective observable $X$ is equivalently represented in Eq.~(\ref{qds}) via any of the set $\{X\}$, the group of unitary operators $\{e^{ia X}\}$, and the algebra of operators that commute with $X$.
Links between these cases are therefore briefly clarified below for the interested reader, including their unification via the extension of quantumness to quantum channels, and applications to the standard asymmetry measure (other readers may wish to skip directly to Sec.~\ref{sec:uncert}).

\begin{proposition} \label{pro1}
	{\it
		If either $S$ or $S'$ is closed under the adjoint operation, then the quantumness measure $Q_d(S|\rho)$ in definition~(\ref{qds}) is invariant under the replacement of the set $S$ by the von Neumann algebra $S''$, or by the group $G_{S''}$ of unitary operators in $S''$ (where both replacements are closed under the adjoint operation).}
\end{proposition}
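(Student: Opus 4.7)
The plan is to reduce all three quantumness measures in the claim to the same minimisation problem by showing that the three sets $S'$, $(S'')'$ and $(G_{S''})'$, over which the infimum in~(\ref{qds}) is effectively taken, coincide as subsets of bounded operators (and hence as subsets of density operators). Since $Q_d(S|\rho)=\inf_{\sigma\in S'}d(\rho,\sigma)$ by definition, equality of the three sets immediately yields equality of the three quantumness measures.

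For the first identity $(S'')'=S'$, I would exploit the purely algebraic fact that $S'''=S'$ for any set $S$ of bounded operators. Every set $T$ satisfies $T\subseteq T''$ (each element of $T$ commutes with every element of its commutant), and the commutant map $T\mapsto T'$ reverses inclusion. Applying the first observation to $T=S'$ gives $S'\subseteq S'''$, and applying the second to $S\subseteq S''$ gives $S'''\subseteq S'$. Hence $(S'')'=S'''=S'$, so that $Q_d(S''|\rho)=Q_d(S|\rho)$ without requiring any hypothesis on $S$.

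The second identity, $(G_{S''})'=(S'')'$, relies on the standard fact that every element of a von Neumann algebra $\mathcal{A}$ is a finite linear combination of unitaries in $\mathcal{A}$, obtained for self-adjoint $A\in\mathcal{A}$ with $\|A\|\leq 1$ via $A=\half(U+U^\dagger)$ with $U:=A+i\sqrt{\id-A^2}\in\mathcal{A}$, and then by splitting a general element into its self-adjoint and antiself-adjoint parts. For this to apply I first need $S''$ to be a von Neumann algebra, and this is where the hypothesis enters: if $S$ is self-adjoint then so is $S'$ (because $[X,Y]=0$ implies $[X^\dagger,Y^\dagger]=0$ together with $X^\dagger\in S$), while if $S'$ is already self-adjoint there is nothing to check; in either case $S''=(S')'$ is the commutant of a self-adjoint set and hence a von Neumann algebra. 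Granted this, $G_{S''}\subseteq S''$ trivially gives $(G_{S''})'\supseteq (S'')'$, and the unitary-decomposition fact gives the reverse inclusion.

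The main obstacle is the unitary-decomposition lemma itself: for the finite-dimensional systems implicit in the paper it follows at once from the spectral theorem applied inside $S''$, but any infinite-dimensional extension would have to invoke the bounded Borel functional calculus within a von Neumann algebra. Everything else is basic commutant bookkeeping together with the elementary observation that taking adjoints intertwines with commutation.
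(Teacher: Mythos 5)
Your proposal is correct and shares the paper's overall strategy: both reduce the claim to the set identities $S'=(S'')'=(G_{S''})'$ and both invoke the fact that a von Neumann algebra is spanned by its unitaries to handle the group replacement. The one genuine difference is in how you establish $(S'')'=S'$: you use the purely algebraic triple-commutant identity $S'''=S'$ (from $T\subseteq T''$ plus inclusion reversal), which holds for an arbitrary set of bounded operators and, as you correctly observe, requires no adjoint-closure hypothesis at all. The paper instead first shows that $S'$ is a $*$-closed algebra containing the identity, hence a von Neumann algebra, and then applies the double commutant theorem $\mathfrak{a}=\mathfrak{a}''$ to $\mathfrak{a}=S'$ --- so in the paper the hypothesis is consumed already at this stage, whereas you defer it entirely to the second identity, where it is needed to guarantee that $S''$ is a von Neumann algebra so that the unitary-decomposition lemma applies. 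Your version localises the role of the hypothesis more precisely and is slightly more elementary for the first identity; the paper's version has the side benefit of exhibiting $S'$ itself as a von Neumann algebra, which it reuses to justify the parenthetical remark that the replacements $S''$ and $G_{S''}$ are adjoint-closed --- a part of the statement your write-up leaves implicit (though it follows at once from $S''$ being a von Neumann algebra and $U^{-1}=U^\dagger$ for unitaries). Your caveat about the infinite-dimensional case is also apt: the unitary-decomposition lemma needs the Borel functional calculus (or at least the continuous functional calculus for $\sqrt{\id-A^2}$) inside $S''$, which the paper outsources to its reference on von Neumann algebras.
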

\begin{proof}
	Note that $S'$ is closed under the adjoint operation under either of the conditions of the Proposition, since  the closure of $S$ under the adjoint  implies that $[X^\dagger, Y]=0$ for all $X\in S$ and $Y\in S'$, which is equivalent to  $[X,Y^\dagger]=0$. Further, $S'$ is closed under multiplication and addition (since $[X,Y]=[X,Z]=0$ implies $[X,YZ]=[X,Y+Z]=0$), and contains the unit operator (since $[X,\id]\equiv0)$. Hence, $S'$ is an algebra,  and the closure of $S'$ under the adjoint operation guarantees that it is a von Neumann algebra~\cite{comm}. Now, any von Neumann algebra $\mathfrak{a}$ is equal to its double commutant $\mathfrak{a}''$, and its commutant $\mathfrak{a}'$ is also a von Neumann algebra~\cite{comm}. Hence, $S' = (S')''=(S'')'$, implying that $Q_d(S|\rho)=Q_d(S''|\rho)$ as claimed.  Moreover, any element of a von Neumann algebra  can be written as a linear combination of (at most four of) its unitaries~\cite{comm}, and hence an operator commutes with the members of the von Neumann algebra $S''$ if and only if it commutes with the members of the group $G_{S''}$ of unitary operators in $S''$. Thus, $S'=(S'')'=(G_{S''})'$, implying $Q_d(S|\rho)=Q_d(S''|\rho)=Q_d(G_{S''}|\rho)$ as claimed. Note each replacement is closed under the adjoint operation, since $S''$ is a von Neumann algebra and  $U^{-1}=U^\dagger$ for $U\in G_{S''}$. 
\end{proof}

Thus, for example, a distance-based measure of rotational asymmetry has the same value irrespective of whether $S$ in Eq.~(\ref{qds}) is chosen to be (a)~the set of rotation operators $\{J_x,J_y,J_z\}$, or (b)~the algebra generated by these operators, or (c)~the unitary representation $\{e^{i\bm J\cdot\bm n}\}$ of the rotation group~\cite{footrot}.  
This  general link between sets, groups and algebras of operators is further illuminated by extending the definition of $Q_d$ to quantum channels. 

In particular, for a given distance function $d(\rho,\sigma)$ and CPTP map $\phi$, define
\beq \label{qdphi}
Q_d(\phi|\rho):= \inf_\sigma d(\rho,\phi(\sigma)) .
\eeq
Thus, $Q_d(\phi|\rho)$ is the distance from $\rho$ to the set of output states of the channel, and may be interpreted as a measure of how closely $\rho$ can be prepared via the channel.
Further, this measure reduces to $Q_d(X|\rho)$ in Eq.~(\ref{qd}) for the channel $\phi_X(\sigma):=\sigma_X$.  It is therefore natural to ask, for the general case of an arbitrary set of operators $S$, whether there is a corresponding channel $\phi_S$ such that 
\beq \label{channel}
Q_d(S|\rho)=Q_d(\phi_S|\rho) 
\eeq
i.e., such that the range of $\phi_S$ is equal to the commutant $S'$ of $S$?
It turns out that, under the closure assumption of Proposition~\ref{pro1}, the answer is largely affirmative. 

First, if $S=\{X\}$ for some projective operator $X$, then 
\beq \label{phix}
\phi_S(\sigma)= \phi_X(\sigma):= \sum_x X_x\sigma X_x = \sigma_X 
\eeq
as noted above. Second, if the group $G_{S''}=\{U_g\}$ of unitary operators on $S''$ is compact, with normalised measure $dg$, then $\phi_S$ is the twirling map~\cite{vacc1,vacc2}
\beq \label{twirl}
\phi_S(\sigma)=\phi_{G_{S''}}(\sigma):=\int dg U_g\sigma U_g^\dagger .
\eeq
Third, if the system Hilbert space is finite dimensional, then 
\beq \label{expec}
\phi_S(\sigma) := \mathfrak{E}_{S'}(\sigma) 
\eeq
 where, for a given von Neumann algebra $\mathfrak{a}$, $\mathfrak{E}_\mathfrak{a}$ denotes the conditional expectation map  defined via~\cite{lara,comm}
\beq
\tr{XY}\equiv \tr{X\mathfrak{E}_{a}(Y)}
\eeq
for $X\in \mathfrak{a}$ and arbitrary $Y$. These maps agree on their common domains, and cover most situations of interest. 

Finally, these maps allow the explicit evaluation of the standard asymmetry $Q_1(S|\rho)$ in most cases.

\begin{proposition} \label{pro2}
	If $S=\{X\}$ for a projective observable $X$ and/or $G_{S''}$ is compact and/or the system Hilbert space is finite dimensional, for some set of operators $S$, and $d(\rho,\sigma)$ is the relative entropy distance function in Eq.~(\ref{d1}),  then the standard asymmetry measure $Q_1(S|\rho)$ defined in Eq.~(\ref{qds}) is given by
	\beq
	Q_1(S|\rho) = D_1(\rho\|\phi_S(\rho)) = H(\phi_S(\rho)) - H(\rho) .
	\eeq
\end{proposition}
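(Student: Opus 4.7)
The strategy is to verify that in each of the three cases of Proposition~\ref{pro2}, the infimum defining $Q_1(S|\rho)$ in Eq.~(\ref{qds}) is attained at $\sigma=\phi_S(\rho)$, and that the minimum value reduces to $H(\phi_S(\rho))-H(\rho)$. I will accomplish this via a Pythagorean-type identity for the relative entropy that is valid for all $\sigma\in S'$, analogous to the one already used in Eq.~(\ref{q1}) for a single projective observable.

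First I would check that $\phi_S(\rho)$ lies in $S'$, so that it is admissible in the infimum. For a projective observable this is immediate from Eq.~(\ref{phix}), since $[\phi_X(\rho),X]=0$. For the twirl in Eq.~(\ref{twirl}), left-invariance of the Haar measure $dg$ gives $U_h\phi_S(\sigma)U_h^\dagger = \phi_S(\sigma)$ for every $U_h\in G_{S''}$, so $\phi_S(\sigma)\in(G_{S''})'=S'$ by Proposition~\ref{pro1}. For the conditional expectation in Eq.~(\ref{expec}) this holds by construction.

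Second, I would establish the key identity
\[
\tr{\rho\log\sigma} \;=\; \tr{\phi_S(\rho)\log\sigma}\qquad\forall\,\sigma\in S'.
\]
In each case this follows from a cyclicity or invariance argument. For projective $X$, $\log\sigma$ is block-diagonal with respect to $\{X_x\}$, so only the diagonal blocks $X_x\rho X_x$ of $\rho$ contribute to the trace. For twirling, unitary invariance of the trace together with $[U_g,\log\sigma]=0$ for $\sigma\in S'$ gives $\tr{\rho\log\sigma}=\tr{U_g\rho U_g^\dagger\log\sigma}$ for every $g$; averaging over $dg$ yields the identity. For the conditional expectation case, $\log\sigma\in S'$ because $S'$ is a von Neumann algebra, hence closed under the continuous functional calculus on its positive part; the defining relation $\tr{XY}=\tr{X\mathfrak{E}_{S'}(Y)}$ applied with $X=\log\sigma$ and $Y=\rho$ delivers it at once. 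Combining this with the standard split $D_1(\rho\|\sigma)=-H(\rho)-\tr{\rho\log\sigma}$ and adding and subtracting $\tr{\phi_S(\rho)\log\phi_S(\rho)}$ yields, for every $\sigma\in S'$,
\[
D_1(\rho\|\sigma) \;=\; H(\phi_S(\rho))-H(\rho)+D_1(\phi_S(\rho)\|\sigma).
\]
Since $D_1(\phi_S(\rho)\|\sigma)\geq 0$ with equality iff $\sigma=\phi_S(\rho)$, and since $\phi_S(\rho)\in S'$ by the first step, the infimum in Eq.~(\ref{qds}) is attained at $\sigma=\phi_S(\rho)$ and equals $H(\phi_S(\rho))-H(\rho)=D_1(\rho\|\phi_S(\rho))$, as claimed.

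The main obstacle I anticipate is the conditional expectation case in the second step: asserting $\log\sigma\in S'$ uses that $S'$ is a von Neumann algebra rather than merely an associative $*$-algebra, which is supplied by Proposition~\ref{pro1} together with closure of $S'$ under the adjoint operation. The projective and twirl cases are essentially uniform reformulations of the derivation already given after Eq.~(\ref{q1}), and the identification of the infimising $\sigma$ with $\phi_S(\rho)$ unifies the three constructions (\ref{phix}), (\ref{twirl}) and (\ref{expec}) within a single Pythagorean framework.
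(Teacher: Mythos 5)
Your proof is correct and follows essentially the same route as the paper's: both rest on the Pythagorean identity $D_1(\rho\|\sigma)=D_1(\rho\|\phi_S(\rho))+D_1(\phi_S(\rho)\|\sigma)$, which the paper derives from the abstract channel properties $\phi\circ\phi=\phi$, $\phi\circ f=f\circ\phi$ and $\phi^*=\phi$, and which you derive from the equivalent trace identity $\tr{\rho\log\sigma}=\tr{\phi_S(\rho)\log\sigma}$ verified case by case on the commutant. The only organizational difference is that you work directly with the infimum over $S'$ and need just the inclusion of the range of $\phi_S$ in $S'$, whereas the paper passes through the channel form $Q_1(\phi_S|\rho)$ via Eq.~(\ref{channel}).
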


\begin{proof}
	The result relies on the composition properties $\phi\circ\phi=\phi$, $\phi\circ f=f\circ\phi$ for any numerical function $f$, and the duality property  $\phi^*=\phi$, which may be checked to hold when $\phi$ is any of the maps in Eqs.~(\ref{phix})--(\ref{expec}).  In particular, these three properties, used in turn, yield 	$\tr{\rho\log\phi(\sigma)}=\tr{\rho\log\phi\circ\phi(\sigma)}
	=\tr{\rho\phi(\log\phi(\sigma))}
	=\tr{\phi(\rho)\log\phi(\sigma)}$. 
	Hence, definitions~(\ref{d1}) and~(\ref{qdphi}) give
	\begin{align}
		D_1(\rho\|\phi(\sigma))-D_1(\rho\|\phi(\rho))& = \tr{\phi(\rho)(\log\phi(\rho)-\log\phi(\sigma))}\nn\\
		&=D_1(\phi(\rho)\|\phi(\sigma)) \geq0, \nn
	\end{align}
	implying
	$Q_1(\phi|\rho) = D_1(\rho\|\phi(\rho))+\inf_\sigma D_1(\phi(\rho)\|\phi(\sigma))=D_1(\rho\|\phi(\rho))$. Thus, since Eq.~(\ref{channel}) holds under the stated conditions,
	\begin{align}
		Q_1(S|\rho)&=Q_1(\phi|\rho)=\tr{\rho\log\rho}-\tr{\rho\log\phi(\rho)}\nn\\
		&=-H(\rho) -\tr{\phi(\rho)\log\phi(\rho)}=H(\phi(\rho))-H(\rho)\nn
	\end{align}
	as required.
\end{proof}

Proposition~\ref{pro2} implies that $\phi_S(\rho)$ is the output state of $\phi_S$ that is closest to $\rho$, when `distance' is measured via relative entropy, and significantly generalises Eq.~(\ref{q1}) for the standard asymmetry of single observables. It has been previously given by Gao {\it et al.} for the case that $S$ is a von Neumann algebra on a finite Hilbert space~\cite{lara}. Note also that Corollary 2.3 of Gao {\it et al.} for strong subadditivity generalises to (using $(S\cup T)'=S'\cap T'$ and Proposition~\ref{pro1}) the uncertainty relation
\beq \label{gao}
Q_1(S|\rho) + Q_1(T|\rho) \geq Q_1(S\cup T|\rho), 
\eeq
for the standard asymmetry of any two sets $S$ and $T$ for which $\phi_S\circ\phi_T=\phi_T\circ\phi_S$. The following section is concerned with uncertainty relations for R\'enyi asymmetry.

\section{Tradeoff relations for  asymmetry and entropy}
\label{sec:uncert}

\subsection{R\'enyi asymmetry vs R\'enyi entropy}
\label{sec:tradeoff}

The R\'enyi asymmetry $Q_\alpha(X|\rho)$ in Eqs.~(\ref{renq}) and~(\ref{rengen}) represents the distance between the state of the system and a set of postmeasurement states for $X$, as measured via the sandwiched R\'enyi divergence in Eq.~(\ref{dalpha}). It is also a measure of the sensitivity of the state to transformations generated by $X$ (or its Naimark extension), and is the quantum component in the quantum-classical decomposition of R\'enyi entropy in Eq.~(\ref{renyidecomp}).  It reduces to the standard asymmetry given by the quantum relative entropy in Eq.~(\ref{q1}) for the case $\alpha=1$, and the general case has recently been found to have useful applications in the areas of quantum information, quantum metrology, quantum coherence, open quantum systems, and time-energy uncertainty relations~\cite{lara,ColesPRL,Hall2022,Chitambar,Gao2020}.

As the intrinsically quantum contribution to  R\'enyi entropy,  R\'enyi asymmetry should be expected to play a fundamental role in entropic uncertainty relations. This expectation is supported by uncertainty relation~(\ref{gao}) of Gao {\it et al.} for the standard asymmetry. In particular, for two  conjugate rank-1 projective observables $X$ and $Y$ on a $d$-dimensional Hilbert space, with $X_x=|x\rangle\langle x|$, $Y_y=|y\rangle \langle y|$ and $|\langle x|y\rangle|^2=d^{-1}$, one has $(\sigma_X)_Y=(\sigma_Y)_X=d^{-1}\id$ from Eq.~(\ref{sigmax}), implying $\phi_{\{X,Y\}}(\sigma)=d^{-1}\id$ from Eqs.~(\ref{qds}) and~(\ref{channel}), and Eq.~(\ref{gao}) 
simplifies to
\beq
Q_1(X|\rho) + Q_1(Y|\rho) \geq Q_1(\{X,Y\}|\rho) = \log d - H(\rho)
\eeq
with  $Q_1(\{X,Y\}|\rho)$ evaluated via Proposition~\ref{pro2}. 
Noting that $H(\rho_X)=H(X|\rho)$ and $H(\rho_Y)=H(Y|\rho)$ for such observables, and applying Eq.~(\ref{q1}), this is equivalent to the tradeoff relation,
\beq \label{conj}
Q_1(X|\rho) + H(Y|\rho) \geq \log d,
\eeq
between the standard asymmetry and Shannon entropy of the observables (and to the strong entropic uncertainty relation 
$H(X|\rho)+H(Y|\rho)\geq \log d + H(\rho)$
for the Shannon entropies of such observables~\cite{kor,Berta}).

A second example is the analogous tradeoff relation
\beq \label{number}
Q_\alpha(J_z|\rho) + H_\alpha(\Phi|\rho) \geq \log 2\pi,\qquad \alpha\geq\half,
\eeq
for angular momentum and rotation angle~\cite{Hall2022}. Noting the upper bound $Q_\alpha(J_z|\rho)\leq H_\beta(J_z|\rho)$ from Eq.~(\ref{renyidecomp}), this immediately implies and so is stronger than  the standard uncertainty relation $H_\alpha(J_z|\rho)+H_\beta(\Phi|\rho)\geq\log2\pi$ for the R\'enyi entropies of these observables~\cite{BB}.

It is shown here that similar tradeoff relations hold for arbitrary pairs of discrete observables, whether projective or nonprojective. However, the case of one projective and one arbitrary observable will be considered first, as the derivation is particularly simple for this case.

\begin{theorem} \label{thm1}
For a discrete projective observable $X$ with projection valued measure $\{X_x\}$, and an arbitrary discrete observable $Y$ with POVM $\{Y_y\}$, one has the tradeoff relation
\beq
Q_\alpha(X|\rho) + H_\alpha(Y|\rho) \geq -\log \max_{x,y}\lambda_{\max}(X_x Y_y X_x),~~ \alpha\geq\half,
\eeq
between R\'enyi asymmetry and entropy, where 
$\lambda_{\max}(A)$ is the maximum eigenvalue of Hermitian operator $A$. 
\end{theorem}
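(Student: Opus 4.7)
The plan is to combine the data-processing inequality (\ref{data}) for the sandwiched R\'enyi divergence with a simple operator-norm bound on the postmeasurement outcome probabilities for $Y$.  Denote the quantity in the lower bound by $\mu:=\max_{x,y}\lambda_{\max}(X_xY_yX_x)$, let $\phi_Y$ be the measurement channel $\phi_Y(\tau):=\sum_y \tr{\tau Y_y}\,|y\rangle\langle y|$, and let $\sigma$ be an arbitrary density operator with $[\sigma,X]=0$ (so that $\sigma_X=\sigma$).  I would start from $Q_\alpha(X|\rho)=\inf_\sigma D_\alpha(\rho\|\sigma_X)$ as in Eq.~(\ref{renq}) and apply (\ref{data}) to the channel $\phi_Y$ to get
\begin{equation}
D_\alpha(\rho\|\sigma_X)\;\geq\; D_\alpha(\phi_Y(\rho)\,\|\,\phi_Y(\sigma_X)) \;=\; D_\alpha(p\|q),
\end{equation}
where $p(y):=\tr{\rho Y_y}$ and $q(y):=\tr{\sigma_X Y_y}$ are classical distributions (since the outputs of $\phi_Y$ are simultaneously diagonal).

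The key step is to bound $q(y)$ uniformly.  Because each $X_x$ is a projection, $X_xY_yX_x$ has support in the range of $X_x$ and satisfies the operator inequality $X_xY_yX_x\leq\lambda_{\max}(X_xY_yX_x)\,X_x\leq \mu\,X_x$.  Hence
\begin{equation}
q(y)=\sum_x \tr{\sigma X_x Y_y X_x}\;\leq\;\mu\sum_x\tr{\sigma X_x}=\mu.
\end{equation}
Substituting this into the classical definition $D_\alpha(p\|q)=(\alpha-1)^{-1}\log\sum_y p(y)^\alpha q(y)^{1-\alpha}$ and splitting into the cases $\alpha>1$ (where $q^{1-\alpha}\geq \mu^{1-\alpha}$) and $\tfrac12\leq\alpha<1$ (where $q^{1-\alpha}\leq \mu^{1-\alpha}$ but the prefactor $(\alpha-1)^{-1}$ is negative) shows that in both cases
\begin{equation}
D_\alpha(p\|q)\;\geq\;-\log\mu+\tfrac{1}{\alpha-1}\log\sum_y p(y)^\alpha \;=\; -\log\mu - H_\alpha(Y|\rho),
\end{equation}
by the definition (\ref{rendef}) of R\'enyi entropy.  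The limit $\alpha\to 1$ handles the remaining case by continuity.

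Combining the two inequalities gives $D_\alpha(\rho\|\sigma_X)+H_\alpha(Y|\rho)\geq -\log\mu$ for every admissible $\sigma$, and taking the infimum over $\sigma$ produces the claimed tradeoff.  I do not expect any serious obstacle: the operator bound $X_xY_yX_x\leq\mu X_x$ crucially uses that $X$ is \emph{projective}, which is exactly the restriction in the theorem statement, while the only mildly delicate point is the sign flip at $\alpha=1$, dealt with by the two-case analysis above.  Generalising to nonprojective $X$ (as the later relation (\ref{uncertrenyi}) requires) would presumably proceed by passing to a Naimark extension via Eq.~(\ref{extension}), but that is outside the scope of this particular statement.
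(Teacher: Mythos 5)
Your proposal is correct and follows essentially the same route as the paper: the same measure-and-discard channel, the same application of the data-processing inequality, and the same uniform bound $\tr{\sigma_X Y_y}\leq\mu$ (your operator inequality $X_xY_yX_x\leq\mu X_x$ is just a cleaner restatement of the paper's eigenstate-maximisation argument, and your two-case sign analysis is the explicit version of the paper's appeal to monotonicity of $\tfrac{1}{\alpha-1}\log\sum_y a(y)b(y)^{1-\alpha}$ in $b$). No gaps.
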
 
\begin{proof} The proof is a generalisation of the derivation given by Korzekwa {\it et al.} of entropic uncertainty relation~(\ref{nondegen}) for rank-1  projective observables~\cite{kor}. First, define a `measure and discard' CPTP map $\varphi_Y$, from the Hilbert space of the system to the Hilbert space of a record system suitable for registering the result of a measurement of $Y$, via
\beq \label{phiy}
\varphi_Y(\rho):= \sum_y \tr{\rho Y_y}|y\rangle\langle y| ,
\eeq
where $\{|y\rangle\}$ is an orthonormal basis for the record system.
Equation~(\ref{sigmax}) and data processing inequality~(\ref{data}) for the sandwiched R\'enyi divergence then give
\begin{align}
D_\alpha(\rho\|\sigma_X) &\geq D_\alpha(\varphi_Y(\rho)\|\varphi_Y(\sigma_X))\nn\\
&=\frac{1}{\alpha-1}\log\sum_y \tr{\rho Y_y}^\alpha \tr{\sigma_X Y_y}^{1-\alpha}\nn\\
&=\frac{1}{\alpha-1}\log\sum_y \tr{\rho Y_y}^\alpha {\rm Tr} \big[\sigma \sum_x X_xY_yX_x\big]^{1-\alpha}\nn.
\end{align}
The final trace is over a sum of orthogonal subspaces, corresponding to the set of projections $X_x$. Hence, this trace is maximised, for a given value of $y$, when $\sigma$ is the eigenstate  corresponding to the maximum possible eigenvalue of $X_xY_yX_x$ over all $x$, and so is upper-bounded by $\mu=\max_{x,y}\lambda_{\max}(X_x Y_y X_x)$. Thus, noting that $\tfrac{1}{\alpha-1}\log\sum_ya(y) b(y)^{1-\alpha}$ is monotonic decreasing in $b(y)$ for $a(y), b(y), \alpha\geq 0$, the R\'enyi asymmetry in Eq.~(\ref{renq}) has the lower bound
\begin{align}
Q_\alpha(X|\rho) &= \inf_\sigma D_\alpha(\rho\|\sigma_X)\nn\\
&\geq \inf_\sigma \frac{1}{\alpha-1}\log\sum_y\tr{\rho Y_y}^\alpha \mu^{1-\alpha}\nn\\
&=-\frac{1}{1-\alpha}\log \sum_y p(y|\rho)^\alpha -\log\mu .
\label{proof1}
\end{align}
Recalling the definition of R\'enyi entropy in Eq.~(\ref{rendef}), this is equivalent to the statement of the theorem.
\end{proof}

Theorem~\ref{thm1} provides a strong link between asymmetry and entropy that significantly extends tradeoff relations~(\ref{conj}) and~(\ref{number}) for conjugate observables. It further leads to strengthened uncertainty relations for R\'enyi entropies, as discussed further below. Note that the proof of the Theorem is relatively elementary, relying on a simple manipulation of data-processing inequality~(\ref{data}) for the sandwiched R\'enyi divergence, suggesting it may also be used to obtain analogous tradeoff relations for other distance-based measures of quantumness or asymmetry. 

Noting that the asymmetry $Q_\alpha(X|\rho)$  vanishes for $[X,\rho]=0$ (consistent with Eq.~(\ref{q0}) for measures of quantumness),  Theorem~\ref{thm1} gives a lower bound for the R\'enyi entropy $H_\alpha(Y|\rho)$ in this case. A particular choice of $X$ yields the following interesting albeit weak corollary.

\begin{corollary} \label{cor1}
For an arbitrary discrete observable $X$ with POVM $\{X_x\}$, and state $\rho$ with  spectral decomposition $\rho=\sum_j e_j E_j$ (i.e., $E_j$ is the projection onto the eigenspace corresponding to eigenvalue $e_j$ of $\rho$), one has the lower bound
\beq
H_\alpha(X|\rho) \geq - \log \max_{x,j:e_j>0} \lambda_{\max}(E_jX_xE_j),~~~\alpha\geq\half,
\eeq
for R\'enyi entropy.
\end{corollary}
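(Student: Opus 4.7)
The plan is to reduce the claim to a direct application of Theorem~\ref{thm1} on the support of $\rho$. Let $P:=\sum_{j:e_j>0}E_j$ denote the projection onto $\mathrm{supp}(\rho)$. Since $P\rho P=\rho$, the outcome probabilities satisfy $p(x|\rho)=\tr{\rho X_x}=\tr{\rho P X_x P}$, so they depend only on the compressed POVM $\{\tilde X_x:=PX_xP\}$ acting on $\mathrm{supp}(\rho)$. This is a valid POVM on $\mathrm{supp}(\rho)$, since $\sum_x \tilde X_x=P\id P=P$, which is the identity operator on the reduced space. Consequently, $H_\alpha(X|\rho)=H_\alpha(\tilde X|\rho)$.

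Next, the collection $\{E_j\}_{j:e_j>0}$ is a projection-valued measure on $\mathrm{supp}(\rho)$, and it commutes with $\rho$ because $\rho=\sum_{j:e_j>0}e_jE_j$ on this subspace. By property~(i) of Sec.~\ref{sec:decompgen} (Eq.~(\ref{q0})), the R\'enyi asymmetry vanishes: $Q_\alpha(\{E_j\}_{j:e_j>0}|\rho)=0$ for $\alpha\geq\half$.

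I would then invoke Theorem~\ref{thm1} on $\mathrm{supp}(\rho)$, taking the projective observable to be $\{E_j\}_{j:e_j>0}$ and the arbitrary observable to be $\{\tilde X_x\}$. This yields
\begin{equation}
H_\alpha(\tilde X|\rho)\geq -\log\max_{x,\,j:e_j>0}\lambda_{\max}(E_j\tilde X_xE_j),\qquad \alpha\geq\half.\nonumber
\end{equation}
Finally, since $e_j>0$ implies $E_jP=PE_j=E_j$, one has $E_j\tilde X_xE_j=E_jPX_xPE_j=E_jX_xE_j$, and combining with $H_\alpha(\tilde X|\rho)=H_\alpha(X|\rho)$ delivers the corollary.

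The only point requiring care is the restriction to the support of $\rho$: a naive application of Theorem~\ref{thm1} with the full spectral projection $\{E_j\}_j$ would allow $j$ with $e_j=0$ in the maximum, potentially producing a strictly weaker bound (as is visible, for example, when $\rho$ is pure, where the corollary correctly reduces to the $H_\infty$-type bound $-\log\max_x p(x|\rho)$). The compression by $P$ is what eliminates those spurious contributions, and the main obstacle is simply verifying that this compression preserves both the R\'enyi entropy and the relevant eigenvalues.
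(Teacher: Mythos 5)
Your proof is correct and follows essentially the same route as the paper: apply Theorem~\ref{thm1} with the spectral projection-valued measure of $\rho$ as the projective observable (whose asymmetry vanishes since it commutes with $\rho$) and the compression of $X$ onto the support of $\rho$ as the arbitrary observable, then note that the compression changes neither the entropy nor the relevant eigenvalues. The paper performs the same compression by conjugating the POVM elements with the support projection and observing that the $j$ with $e_j=0$ contribute nothing to the maximum, so the two arguments differ only in bookkeeping.
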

\begin{proof}
Define the projection onto the support of $\rho$ by $E:=\sum_{j:p_j>0} E_j$, implying  $\rho=E\rho E$, and let $Y_E$  be the projection of $Y$ onto the support of $\rho$, with POVM $\{EY_yE\}$. Replacing $X$ by $\rho$ and $Y$ by $Y_E$ in Theorem~\ref{thm1}, gives a vanishing asymmetry $Q_\alpha(\rho|\rho)=0$ via definition~(\ref{renq}) and a lower bound 
$\mu_{\rho Y_E} = \max_{y,j:p_j>0}\lambda_{\max}(E_jEY_yE E_j)=\max_{y,j:p_j>0}\lambda_{\max}(E_jY_y E_j) 
$
via $EE_j=E_j~(=0)$ for $p_j>0~(=0)$. Noting $H_\alpha(Y_E|\rho)=H_\alpha(Y|\rho)$ and changing notation from $Y$ to $X$ then yields the Corollary as desired.
\end{proof}

Corollary~\ref{cor1} bounds the R\'enyi entropy of any observable in terms of its degree of incompatibility with the system state. Note that it represents a slight sharpening of Eq.~(\ref{rhoxbound}) in the Introduction, where the latter does not limit the maximisation to $e_j>0$, and implies a state-independent lower bound, $-\log\lambda_{\max}(X_x)$, via $E_j\leq\id$. 
\blk However, the Corollary does not take the mixedness of the state into account,  making it relatively weak in comparison to bounds that do. 

For example, Corollary~\ref{cor1} is weaker than the known classical lower bound 
\beq \label{mono}
H_\alpha(X|\rho) \geq H_\infty(X|\rho) = -\log \max_x p(x|\rho)
\eeq
for R\'enyi entropy, following from the monontonic decreasing property $H_\alpha(X|\rho)\geq H_\beta(X|\rho)$ for $\alpha<\beta$~\cite{Harremoes,footmono},  where $H_\infty$ depends on mixedness via the eigenvalues $e_j$ of $\rho$. In particular, the operator $E_jX_xE_j$ is nonzero only on the $\tr{E_j}$-dimensional unit eigenspace of $E_j$ and so has at most $\tr{E_j}$ nonzero eigenvalues, implying that the sum of its eigenvalues, $\tr{E_jX_xE_j}$, is upper bounded by $\tr{E_j}\lambda_{\max}(E_jX_xE_j)$.  Hence, defining $p_j:=e_j \tr{E_j}$, then $\sum_{j:p_j>0} p_j=1$ and one has $p(x|\rho)=\sum_{j:e_j>0} e_j\tr{E_jX_xE_j}\leq \sum_{j:p_j>0} p_j\lambda_{\max}(E_jX_xE_j) \leq \max_{j:e_j>0} \lambda_{\max}(E_jX_xE_j)$. Maximising over $x$ immediately implies the bound in  Corollary~\ref{cor1} is never greater than the bound $H_\infty(X|\rho)$ in Eq.~(\ref{mono}), as claimed (with equality for pure states). 

A second example will be given in Corollary~\ref{cor4} below, which provides a strong lower bound for R\'enyi entropy that depends on mixedness rather than incompatibility, and which is stronger than both Corollary~\ref{cor1} and the classical bound~(\ref{mono}) for sufficiently mixed states.
\blk

The main technical result of this paper is a generalisation of Theorem~\ref{thm1} to all discrete observables, whether projective or nonprojective, via the generalised definition of asymmetry in Eq.~(\ref{rengen}).
\begin{theorem} \label{thm2}
For arbitary discrete observables $X$ and $Y$, with corresponding POVMs $\{X_x\}$ and $\{Y_y\}$, one has the tradeoff relation
\beq
Q_\alpha(X|\rho) + H_\alpha(Y|\rho) \geq -\log \mu_{XY},\qquad \alpha\geq\half,
\eeq
for asymmetry and entropy, where 
\beq
\mu_{XY} := \max_{x,y} \lambda_{\max}(X_x^{1/2}Y_y X_x^{1/2}) = \mu_{YX}
	\eeq
	denotes the maximum eigenvalue of $X_x^{1/2}Y_y X_x^{1/2}$ over $x$ and $y$.
\end{theorem}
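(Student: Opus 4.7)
The plan is to adapt the argument of Theorem~\ref{thm1} to the nonprojective setting by applying the data-processing inequality on the Naimark extension $\tilde{\mathcal H}$ of the system Hilbert space, using a measure-and-discard map for $Y$ augmented with a ``null'' outcome that accounts for the orthogonal complement of $\mathcal H$.

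First, I would introduce the POVM $\hat Y=\{PY_yP\}_y\cup\{\id-P\}$ on $\tilde{\mathcal H}$ (with $P$ the projection from $\tilde{\mathcal H}$ onto $\mathcal H$), and the corresponding measure-and-discard CPTP channel $\varphi_{\hat Y}$ analogous to Eq.~(\ref{phiy}). Since $P\rho P=\rho$, the null outcome has zero probability under $\rho$, so $\varphi_{\hat Y}(\rho)=\sum_y p(y|\rho)\ket{y}\bra{y}$ has classical R\'enyi entropy precisely $H_\alpha(Y|\rho)$. For the image of $\sigma_{\tilde X}$, the algebraic simplification $P\sigma_{\tilde X}P=\sum_x X_x\sigma X_x$ follows from $X_x=P\tilde X_x P$ together with $P\sigma P=\sigma$, so the non-null probabilities of $\varphi_{\hat Y}(\sigma_{\tilde X})$ reduce to $q_y=\sum_x \tr{\sigma X_x Y_y X_x}$.

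The crucial step is the operator inequality
\begin{align*}
\sum_x X_x Y_y X_x &= \sum_x X_x^{1/2}\bl X_x^{1/2} Y_y X_x^{1/2}\br X_x^{1/2}\\
&\leq \mu_{XY}\sum_x X_x=\mu_{XY}\,\id,
\end{align*}
obtained by bounding the inner bracket by $\mu_{XY}\,\id$ and then using $\sum_x X_x=\id$. This immediately gives $q_y\leq \mu_{XY}$ for every non-null outcome $y$.

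Finally, applying the data-processing inequality $D_\alpha(\rho\|\sigma_{\tilde X})\geq D_\alpha(\varphi_{\hat Y}(\rho)\|\varphi_{\hat Y}(\sigma_{\tilde X}))$ from Eq.~(\ref{data}), the null outcome drops out of the right-hand side since $\varphi_{\hat Y}(\rho)$ assigns it zero probability. The remaining classical divergence $\frac{1}{\alpha-1}\log\sum_y p(y|\rho)^\alpha q_y^{1-\alpha}$ is monotonically decreasing in each $q_y$ for $\alpha\geq\half$, exactly as in the proof of Theorem~\ref{thm1}, so the bound $q_y\leq \mu_{XY}$ yields $D_\alpha(\rho\|\sigma_{\tilde X})\geq -\log\mu_{XY}-H_\alpha(Y|\rho)$. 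Taking the infimum over $\sigma$ on the left then gives the theorem. The main obstacle is replacing the orthogonal-block argument of the projective proof (where the maximum eigenvalue of $\sum_x X_x Y_y X_x$ is attained on a single block because the $X_x$ are orthogonal projectors) with an estimate valid for overlapping POVM elements; the operator-monotone sandwich bound above supplies this replacement.
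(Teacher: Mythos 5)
Your proposal is correct, and its skeleton matches the paper's: Naimark-extend $X$, augment the measure-and-discard channel for $Y$ with a null outcome for $\tilde\id-P$ (which contributes nothing since $\tr{\rho P^\perp}=0$), apply data-processing inequality~(\ref{data}), and use the monotone decrease of the classical divergence in the output probabilities $q_y$. Where you genuinely depart from the paper is in the key step of bounding $q_y$. The paper first relaxes the infimum from system states $\sigma$ to arbitrary states $\tilde\sigma$ on the extended space, then exploits the orthogonality of the Naimark projections: $\sum_x\tilde X_xY_y\tilde X_x$ is block-diagonal, so its largest eigenvalue is $\max_x\lambda_{\max}(\tilde X_xY_y\tilde X_x)$, which is converted to $\lambda_{\max}(X_x^{1/2}Y_yX_x^{1/2})$ through a chain of $\lambda_{\max}(AA^\dagger)=\lambda_{\max}(A^\dagger A)$ identities together with $Y_y=PY_yP$ and $X_x=P\tilde X_xP$. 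You instead project the post-measurement state back onto $\h$, using $P\sigma_{\tilde X}P=\sum_xX_x\sigma X_x$, so that $q_y=\tr{\sigma\sum_xX_xY_yX_x}$ lives on the original space, and you bound the operator directly via $X_xY_yX_x=X_x^{1/2}\bl X_x^{1/2}Y_yX_x^{1/2}\br X_x^{1/2}\leq\mu_{XY}X_x$ and the normalization $\sum_xX_x=\id$. Both routes are valid and yield the identical constant; yours is arguably more elementary (no relaxation of the infimum to the extended space, no eigenvalue chain, and the role of $\sum_xX_x=\id$ is explicit), while the paper's eigenvalue chain delivers the stated symmetry $\mu_{XY}=\mu_{YX}$ as a byproduct, which your argument would need to note separately via $\lambda_{\max}(A^\dagger A)=\lambda_{\max}(AA^\dagger)$ with $A=Y_y^{1/2}X_x^{1/2}$.
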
 
\begin{proof}
It is first necessary to generalise $Y$ to an observable $\tilde Y$ on the extended Hilbert space $\tilde{\cal H}$ for which the Naimark extension~(\ref{naimark}) is defined for $X$. In particular,	$\tilde Y$ is defined to correspond to the POVM $\{Y_y\}\cup \{P^\perp\}$ on $\tilde{\cal H}$, with $P^\perp:=\tilde\id-P$, and $\varphi_Y$ in Eq.~(\ref{phiy}) is generalised to the `measure and discard' map
\beq
\varphi_{\tilde Y}(\tilde{\rho}):= \sum_y \tr{\tilde{\rho} Y_y}|y\rangle\langle y| + \tr{\tilde\rho P^\perp} |y^\perp\rangle\langle y^\perp|, 
\eeq
for general states $\tilde \rho$ on $\tilde{\cal H}$, where $|y^\perp\rangle$ is the record state corresponding to POVM element $P^\perp$. For system state $\rho$ on $\cal H$ and general state $\tilde\sigma$ on $\tilde{\cal H}$ one then has, using  data processing inequality~(\ref{data}) and noting that $p(y^\perp|\rho)=\tr{\rho P^\perp}=0$ via $\rho=P\rho P$,
\begin{align}
	D_\alpha(\rho\|\tilde{\sigma}_{\tilde{X}}) &\geq D_\alpha(\varphi_{\tilde{Y}}(\rho)\|\varphi_{\tilde{Y}}(\tilde{\sigma}_{\tilde{X}}))\nn\\
	&=\frac{1}{\alpha-1}\log\sum_y \tr{\rho Y_y}^\alpha \tr{\tilde{\sigma}_{\tilde{X}} Y_y}^{1-\alpha}\nn\\
	&=\frac{1}{\alpha-1}\log\sum_y \tr{\rho Y_y}^\alpha {\rm Tr} \big[\tilde{\sigma} \sum_x \tilde X_xY_y\tilde X_x\big]^{1-\alpha}\nn
\end{align}
 similarly to the proof of Theorem~\ref{thm1}. Also similarly, noting that the $\tilde X_x$  are orthogonal projections, the final trace is maximised for a given value of $y$ by choosing $\tilde\sigma$ to be the eigenstate corresponding to the maximum over $x$ of the maximum eigenvalue of $\tilde X_xY_y\tilde X_x$. Now, 
\begin{align} 
\lambda_{\max}(\tilde X_xY_y\tilde X_x)&=\lambda_{\max}[(\tilde X_xY_y^{1/2})(\tilde X_xY_y^{1/2})^\dagger] \nn\\
&= \lambda_{\max}[(\tilde X_xY_y^{1/2})^\dagger(\tilde X_xY_y^{1/2})] \nn\\
&= \lambda_{\max}(Y_y^{1/2}\tilde X_xY_y^{1/2})] \nn\\
&= \lambda_{\max}(Y_y^{1/2}P\tilde X_x P Y_y^{1/2})] \nn\\
&= \lambda_{\max}(Y_y^{1/2} X_xY_y^{1/2})] \nn\\
&= \lambda_{\max}( X_x^{1/2} Y_yX_x^{1/2}),
\label{chain}
\end{align}
with the second and last lines following from $\lambda_{max}(AA^\dagger)=\lambda_{\max}(A^\dagger A)$, and the fourth and fifth lines  from $Y_y=PY_yP$ and Eq.~(\ref{naimark}), respectively. Hence,
\begin{align}
Q_\alpha(X|\rho) &=\inf_\sigma D_\alpha(\rho\|\sigma_{\tilde X}) \nn\\
	&\geq \inf_{\tilde \sigma}  D_\alpha(\rho\|\tilde\sigma_{\tilde X}) \nn\\
&\geq -\frac{1}{1-\alpha}\log \sum_y p(y|\rho)^\alpha -\log\mu_{XY}
\end{align}
similarly to Eq.~(\ref{proof1}). This is equivalent to the statement of the Theorem, noting that $\mu_{XY}=\mu_{YX}$ via the last two lines of Eq.~(\ref{chain}).
\end{proof}

Theorem~\ref{thm2} reduces to Theorem \ref{thm1} when $X$ is a projective observable, and underpins the strong uncertainty relation for R\'enyi entropies given in Theorem~\ref{thm3} below. Note that $\mu_{XY}$ can also be written as
\beq
\mu_{XY}= \max_{x,y} \| X_x^{1/2}Y_y^{1/2}\|_\infty^2 =\max_{x,y} \| Y_y^{1/2}X_x^{1/2}\|_\infty^2
\eeq
where $\|A\|_\infty=[\lambda_{\max}(A^\dagger A)]^{1/2}$ denotes the operator norm of $A$, i.e., the largest singular value of $A$.

Theorem~\ref{thm2} immediately provides a simple lower bound for R\'enyi asymmetry, as per the following corollary.
\begin{corollary} \label{corinsert}
For an arbitrary discrete observable $X$ with POVM $\{X_x\}$ one has the lower bound
\beq
Q_\alpha(X|\rho) \geq  -\log \max_x \lambda_{\max}(X_x) ,\qquad \alpha\geq\half,
\eeq
for the R\'enyi asymmetry of $X$.
\end{corollary}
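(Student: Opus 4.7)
The plan is to obtain the corollary as a direct specialisation of Theorem~\ref{thm2} by choosing a trivial second observable $Y$ that zeroes out the entropy contribution while forcing $\mu_{XY}$ to collapse to $\max_x \lambda_{\max}(X_x)$. Concretely, I would take $Y$ to be the single-outcome POVM $\{Y_1 = \id\}$, which is a legitimate (albeit uninformative) discrete observable.

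With this choice, two simplifications happen simultaneously. First, the only outcome probability is $p(1|\rho) = \tr{\rho} = 1$, so the R\'enyi entropy defined in Eq.~(\ref{rendef}) gives $H_\alpha(Y|\rho) = \tfrac{1}{1-\alpha}\log 1 = 0$ for every $\alpha \ne 1$, and the $\alpha=1$ limit likewise yields zero. Second, the compatibility quantity reduces to
\[
\mu_{XY} = \max_{x} \lambda_{\max}\!\left(X_x^{1/2}\,\id\, X_x^{1/2}\right) = \max_x \lambda_{\max}(X_x),
\]
since there is only a single $y$ to maximise over.

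Plugging these two facts into the tradeoff inequality of Theorem~\ref{thm2}, which holds for \emph{arbitrary} discrete observables with $\alpha \geq \tfrac12$, yields
\[
Q_\alpha(X|\rho) \;=\; Q_\alpha(X|\rho) + H_\alpha(Y|\rho) \;\geq\; -\log \mu_{XY} \;=\; -\log \max_x \lambda_{\max}(X_x),
\]
which is exactly the claimed bound. There is no real obstacle here: the proof is a one-line reduction, and the only thing to verify is that Theorem~\ref{thm2} applies to the trivial POVM $\{\id\}$. This is immediate, because the theorem's hypotheses require only that $\{Y_y\}$ be a POVM on the system Hilbert space, which $\{\id\}$ trivially satisfies. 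Hence the corollary follows with essentially no additional work.
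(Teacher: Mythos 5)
Your proposal is exactly the paper's own proof: the paper also takes $Y=1$ (the trivial single-outcome POVM) in Theorem~\ref{thm2}, notes $H_\alpha(1|\rho)\equiv 0$, and reads off $\mu_{XY}=\max_x\lambda_{\max}(X_x)$. The argument is correct and complete.
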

\begin{proof}
Choose $Y=1$ in Theorem~\ref{thm2} and observe that $H_\alpha(1|\rho)\equiv0$.
\end{proof}

The lower bound in Corollary~\ref{corinsert} is trivial for projective observables, but typically nonvanishing otherwise, corresponding to the `hidden' asymmetry discussed following Eq.~(\ref{rengen}). For example, the lower bound is  $\log \tfrac{3}{2}$ for the qubit trine observable $T$ discussed following Corollary~\ref{cor4}.

Finally, it is worth noting here, as a preview to the next Section, that the standard uncertainty relation for R\'enyi entropies in Eq.~(\ref{renknown}) of the Introduction has a simple direct derivation via Theorem~\ref{thm2} for the case of Shannon entropy, i.e., $\alpha=\beta=1$, as per the following Corollary.

\begin{corollary} \label{cor2}
For arbitrary discrete observables $X$ and $Y$, with corresponding POVMs $\{X_x\}$ and $\{Y_y\}$, one has the entropic uncertainty relation
\beq
H(X|\rho) + H(Y|\rho) \geq -\log \mu_{XY}
\eeq
for Shannon entropy,  where 
\beq
\mu_{XY} := \max_{x,y} \lambda_{\max}(X_x^{1/2}Y_y X_x^{1/2}) = \mu_{YX}
\eeq
denotes the maximum eigenvalue of $X_x^{1/2}Y_y X_x^{1/2}$ over $x$ and $y$.
\end{corollary}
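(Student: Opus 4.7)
The plan is to derive the corollary immediately from Theorem~\ref{thm2} specialised to $\alpha=1$, combined with the quantum--classical decomposition of Shannon entropy. First I would note that Shannon entropy is precisely the $\alpha=1$ instance of R\'enyi entropy, $H(Y|\rho)=H_1(Y|\rho)$, and that $\alpha=1$ lies in the range $\alpha\geq\half$ where Theorem~\ref{thm2} is valid (the sandwiched R\'enyi divergence~(\ref{dalpha}) reduces to the relative entropy~(\ref{d1}) in this limit, so $Q_1(X|\rho)$ is the standard asymmetry of Eq.~(\ref{q1})). Specialising Theorem~\ref{thm2} therefore yields
\begin{equation}
Q_1(X|\rho) + H(Y|\rho) \geq -\log \mu_{XY}.
\end{equation}

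Second, I would invoke the Korzekwa {\it et al.} decomposition recalled in Eq.~(\ref{u1}) (equivalently the $\alpha=\beta=1$ case of Eq.~(\ref{renyidecomp})), namely $H(X|\rho) = Q_1(X|\rho) + C_1(X|\rho)$. By property~(iii) of Sec.~\ref{sec:decompgen} the classical contribution is nonnegative, $C_1(X|\rho)\geq 0$, so $H(X|\rho)\geq Q_1(X|\rho)$. Adding this inequality to the bound of the previous paragraph produces the asserted entropic uncertainty relation at once.

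There is no real obstacle here --- the corollary is essentially a direct substitution and combination of two already-established facts. The only point worth checking carefully is that $\alpha=1$ really is admissible in Theorem~\ref{thm2}, which it is (being strictly interior to $[\half,\infty)$), and that the decomposition~(\ref{renyidecomp}) specialises correctly at $\alpha=\beta=1$ to the Shannon case of Eq.~(\ref{u1}), which it does by construction. In passing, the same argument applied to the pair $(Y,X)$ together with the identity $\mu_{XY}=\mu_{YX}$ noted in the last two lines of Eq.~(\ref{chain}) shows that one may replace $Q_1(X|\rho)$ by $\max\{Q_1(X|\rho),Q_1(Y|\rho)\}$ on the right-hand side, foreshadowing the stronger mixedness-dependent bound~(\ref{uncertrenyi}) advertised in the Introduction.
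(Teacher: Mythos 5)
Your argument is correct, but it is not the route the paper takes for this corollary. The paper's proof first specialises to pure states, where Eq.~(\ref{asymmgen}) gives the exact identity $Q_1(X|\,|\psi\rangle\langle\psi|)=H(X|\,|\psi\rangle\langle\psi|)$, so Theorem~\ref{thm2} immediately yields the bound for pure states; it then extends to mixed states by decomposing $\rho$ into pure states and using the concavity of Shannon entropy (the right-hand side being state-independent). You instead apply Theorem~\ref{thm2} directly to the mixed state and discard the gap $C_1(X|\rho)=H(X|\rho)-Q_1(X|\rho)\geq 0$ from decomposition~(\ref{u1})/(\ref{renyidecomp}). Both are valid; your version is essentially the $\alpha=\beta=1$ case of Theorem~\ref{thm3} with the classical term dropped, so it proves slightly more and makes the later strengthening transparent, whereas the paper's version needs only the elementary pure-state duality plus concavity and does not lean on the nonnegativity of the classical component (i.e.\ assumption~(iii), $Q_\alpha\leq H_\beta$) for general mixed states and nonprojective observables. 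One caveat: your closing remark about replacing $Q_1(X|\rho)$ by $\max\{Q_1(X|\rho),Q_1(Y|\rho)\}$ ``on the right-hand side'' is garbled as stated --- what the symmetrisation under $\mu_{XY}=\mu_{YX}$ actually buys, after substituting the decomposition for both observables, is the additive term $\max\{C_1(X|\rho),C_1(Y|\rho)\}$ of Eq.~(\ref{uncertrenyi}) --- but this does not affect the corollary itself.
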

\begin{proof}
The uncertainty relation follows immediately for any pure state $\rho=|\psi\rangle\langle\psi|$,  from Eq.~(\ref{asymmgen}) and Theorem~\ref{thm2}. It then immediately follows for general states via the concavity of Shannon entropy.
\end{proof}

Corollary~\ref{cor2} is given for rank-1 projective observables in~\cite{maas} and for arbitrary observables in~\cite{Para}, and generalised to uncertainty relation~(\ref{renknown}) for R\'enyi entropies in~\cite{Rastegin2010}. However, the proofs of Theorem~\ref{thm2} and Corollary~\ref{cor2} ultimately rely on data-processing inequality~(\ref{data}), rather than on Riesz's theorem~\cite{maas,Para,Rastegin2010}. This has the advantage of leading to the stronger uncertainty relation in Eq.~(\ref{uncertrenyi}) for R\'enyi entropies for very little additional work, as will be shown next.

\subsection{Strong uncertainty relations for R\'enyi entropies}
\label{sec:strong}

It is straightforward to strengthen Corollary~\ref{cor2}, so as to obtain uncertainty relation~(\ref{uncertrenyi}) of the Introduction, which takes the mixedness of the state into account. 

\begin{theorem} \label{thm3}
For arbitrary discrete observables $X$ and $Y$, with corresponding POVMs $\{X_x\}$ and $\{Y_y\}$, one has the entropic uncertainty relation
\begin{align}  \label{thm3a}
	H_\alpha(X|\rho)+H_\beta(Y|\rho) &\geq -\log \mu_{XY}\nn\\&~\,~~+\max\{C_\alpha(Y|\rho),C_\beta(X|\rho)\} 
\end{align}
for R\'enyi entropies with $1/\alpha+1/\beta=2$, where $C_\alpha(Z|\rho)$ is the classical contribution to $H_\beta(Z|\rho)$, defined via  quantum-classical decomposition~(\ref{renyidecomp}), and
\beq
\mu_{XY} := \max_{x,y} \lambda_{\max}(X_x^{1/2}Y_y X_x^{1/2}) = \mu_{YX}
\eeq
denotes the maximum eigenvalue of $X_x^{1/2}Y_y X_x^{1/2}$ over $x$ and $y$.
\end{theorem}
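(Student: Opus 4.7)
The plan is to extract Theorem~\ref{thm3} from Theorem~\ref{thm2} by invoking the quantum--classical decomposition of R\'enyi entropy in \erf{renyidecomp}. Specifically, I would apply the tradeoff inequality in both directions---once at parameter $\beta$ and once at $\alpha$ with the observables swapped---and retain the stronger of the two resulting lower bounds via a maximum. The constraint $1/\alpha + 1/\beta = 2$ is symmetric in its two arguments and forces both parameters into $[1/2,\infty]$ whenever one of them lies there, so Theorem~\ref{thm2} is available at either value.

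Applying Theorem~\ref{thm2} at parameter $\beta$ to $(X,Y)$ yields $Q_\beta(X|\rho) + H_\beta(Y|\rho) \geq -\log\mu_{XY}$. The decomposition in \erf{renyidecomp} can be rewritten, by swapping the dual labels $\alpha \leftrightarrow \beta$ (permissible by the symmetry just noted), as $H_\alpha(X|\rho) = Q_\beta(X|\rho) + C_\beta(X|\rho)$. Substituting $Q_\beta(X|\rho) = H_\alpha(X|\rho) - C_\beta(X|\rho)$ therefore gives $H_\alpha(X|\rho) + H_\beta(Y|\rho) \geq -\log\mu_{XY} + C_\beta(X|\rho)$. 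Symmetrically, applying Theorem~\ref{thm2} at parameter $\alpha$ to the swapped pair $(Y,X)$, and using $\mu_{YX} = \mu_{XY}$, gives $Q_\alpha(Y|\rho) + H_\alpha(X|\rho) \geq -\log\mu_{XY}$, whereupon the original form $Q_\alpha(Y|\rho) = H_\beta(Y|\rho) - C_\alpha(Y|\rho)$ of the decomposition yields $H_\alpha(X|\rho) + H_\beta(Y|\rho) \geq -\log\mu_{XY} + C_\alpha(Y|\rho)$. Taking the larger of these two lower bounds reproduces \erf{thm3a}.

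There is no substantive obstacle: the proof is essentially bookkeeping on top of Theorem~\ref{thm2} and the decomposition in \erf{renyidecomp}. The only point deserving care is correct tracking of the dual pairing $(\alpha,\beta)$ through the two applications, together with the recognition that the symmetry of the constraint $1/\alpha + 1/\beta = 2$ licenses both the original and the swapped forms of the decomposition so that each application of Theorem~\ref{thm2} can be converted cleanly into a bound on $H_\alpha(X|\rho) + H_\beta(Y|\rho)$. No further analytic input---such as Riesz interpolation, concavity of entropy, or fresh data-processing arguments---is required beyond the earlier machinery.
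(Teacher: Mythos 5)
Your proposal is correct and follows essentially the same route as the paper: the paper likewise substitutes the decomposition~(\ref{renyidecomp}) into Theorem~\ref{thm2} and compares the inequality obtained by swapping $X$ with $Y$ against the one obtained by swapping $\alpha$ with $\beta$, taking the maximum of the two lower bounds. Your bookkeeping of the dual pairing and the observation that $1/\alpha+1/\beta=2$ keeps both parameters in $[\half,\infty]$ are exactly the points the paper's (terser) proof relies on.
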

\begin{proof}
Substitution of the quantum-classical decomposition~(\ref{renyidecomp}) of $H_\beta(X|\rho)$ into Theorem~\ref{thm2} gives
\beq
H_\beta(X|\rho) +H_\alpha(Y|\rho) \geq -\log\mu_{XY} + C_\alpha(X|\rho)
\eeq
for $1/\alpha+1/\beta=2$. Comparing the inequality resulting from swapping $X$ with $Y$ in this expression with the inequality resulting from swapping $\alpha$ with $\beta$ then yields the Theorem as desired.
\end{proof}

Recalling that the classical component $C_\alpha(X|\rho)$ vanishes for pure states as per Eq.~(\ref{classgen}), it is seen that Theorem~\ref{thm3} improves on the standard uncertainty relation in Corollary~\ref{cor2} by taking the mixedness of the state into account.  For example, for the special case $\alpha=\beta=1$ one has the following corollary.
\begin{corollary} \label{cor3}
For an arbitrary rank-1 discrete observable $X$ and an arbitrary discrete observable $Y$, with respective POVM elements $\{|x\rangle\langle x|\}$ and $\{Y_y\}$, the Shannon entropies of $X$ and $Y$ satisfy the uncertainty relation
\beq
H(X|\rho) + H(Y|\rho) \geq -\log \max_{x,y} \langle x|Y_y|x\rangle +H(\rho).
\eeq
\end{corollary}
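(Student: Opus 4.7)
The plan is to specialise Theorem~\ref{thm3} to the case $\alpha=\beta=1$, which satisfies the constraint $1/\alpha+1/\beta=2$ and reduces each R\'enyi entropy to a Shannon entropy. In that limit Theorem~\ref{thm3} reads $H(X|\rho)+H(Y|\rho)\geq -\log\mu_{XY}+\max\{C_1(X|\rho),C_1(Y|\rho)\}$, so the Corollary reduces to two ingredients: (i)~evaluating $\mu_{XY}$ when $X$ is rank-1, and (ii)~showing that the larger of the two classical contributions dominates $H(\rho)$.

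For step~(i), I would exploit that each $X_x=|x\rangle\langle x|$ is a rank-1 projection---since $\sum_x|x\rangle\langle x|=\id$ with each $|x\rangle$ normalised forces the family $\{|x\rangle\}$ to be orthonormal---and hence $X_x^{1/2}=X_x$. A one-line calculation then gives $X_x^{1/2}Y_yX_x^{1/2}=\langle x|Y_y|x\rangle\,|x\rangle\langle x|$, a rank-1 operator whose unique nonzero eigenvalue is $\langle x|Y_y|x\rangle$. Maximising over $x,y$ yields the required $\mu_{XY}=\max_{x,y}\langle x|Y_y|x\rangle$.

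For step~(ii), the orthonormality of the $|x\rangle$ means $X$ is a (rank-1) \emph{projective} observable, so the explicit formula Eq.~(\ref{c1}) for the classical contribution applies without any detour through Naimark extensions. Because the normalised postmeasurement state $X_x\rho X_x/p(x|\rho)=|x\rangle\langle x|$ is pure, each term in the subtracted sum in Eq.~(\ref{c1}) has vanishing von Neumann entropy, leaving $C_1(X|\rho)=H(\rho)$. Combining this with the general nonnegativity $C_1(Y|\rho)\geq 0$---inherited from assumption~(iii) of Sec.~\ref{sec:decompgen}---gives $\max\{C_1(X|\rho),C_1(Y|\rho)\}\geq H(\rho)$, and substitution into the bound from Theorem~\ref{thm3} delivers the stated inequality. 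The real content sits in Theorem~\ref{thm3}, which already encodes mixedness through $C_1$; the only delicate observation in this corollary is the projectivity inference from the rank-1 form of the POVM elements, which is what lets Eq.~(\ref{c1}) collapse cleanly to $H(\rho)$, so there is no substantive obstacle beyond this bookkeeping.
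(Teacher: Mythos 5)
There is a genuine gap: your proof silently assumes that a rank-1 POVM is projective, which is false in general and excludes cases the Corollary is meant to cover. The inference ``$\sum_x|x\rangle\langle x|=\id$ with each $|x\rangle$ normalised forces orthonormality'' is correct \emph{given} normalisation, but the hypothesis of the Corollary is only that each POVM element has rank one, i.e.\ $X_x=|x\rangle\langle x|$ with possibly subnormalised $|x\rangle$ (so $\langle x|x\rangle\leq 1$). The qubit trine observable, with $T_j=\frac{2}{3}|\bm m^{j}\rangle\langle\bm m^{j}|$, is exactly such a rank-1 nonprojective observable, and the paper's discussion around Eq.~(\ref{reduced}) and Eq.~(\ref{trine2}) relies on the Corollary applying to it. For such observables Eq.~(\ref{c1}) does not apply directly; the classical contribution is defined through a Naimark extension via Eq.~(\ref{extension}), giving $C_1(X|\rho)=H(\rho)-\sum_xp(x|\rho)H(\tilde X_x\rho\tilde X_x/p(x|\rho))$, and one must still show the subtracted term can be made to vanish. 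The paper's proof does this by showing that a rank-1 POVM always admits a rank-1 \emph{projective} Naimark extension: writing $\tilde X_x=\sum_k|x,k\rangle\langle x,k|$, the condition $X_x=P\tilde X_xP$ with $X_x$ rank-1 forces $P|x,k\rangle=0$ for all but one $k$, so one may replace $\tilde X$ by the rank-1 projective observable built from the surviving vectors, whence $C_1(X|\rho)=H(\rho)$. This Naimark step is the substantive content of the Corollary beyond Theorem~\ref{thm3}, and it is missing from your argument.

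Two smaller points. First, your evaluation of $\mu_{XY}$ reaches the correct value $\max_{x,y}\langle x|Y_y|x\rangle$ even in the subnormalised case, but not by the route you give: for $\langle x|x\rangle=c<1$ one has $X_x^{1/2}=c^{-1/2}|x\rangle\langle x|\neq X_x$, and the maximum eigenvalue of $X_x^{1/2}Y_yX_x^{1/2}=c^{-1}\langle x|Y_y|x\rangle\,|x\rangle\langle x|$ is still $\langle x|Y_y|x\rangle$ because $|x\rangle\langle x|$ has eigenvalue $c$; this needs to be said. Second, your observation that one only needs $\max\{C_1(X|\rho),C_1(Y|\rho)\}\geq C_1(X|\rho)$ (rather than the paper's stronger claim $C_1(X|\rho)\geq C_1(Y|\rho)$) is a legitimate small simplification, but it does not rescue the main gap.
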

\begin{proof}
For projective observables this result is an immediate consequence of Eq.~(\ref{c1}) and Theorem~\ref{thm3} for $\alpha=\beta=1$. More generally, for any discrete observable $X$, with Naimark extension $\tilde X$, Eqs.~(\ref{c1}) and~(\ref{extension}) yield $C_1(X|\rho)=H(\rho)-\sum_xp(x|\rho)H(\tilde\rho_x)$, with $\tilde\rho_x:=\tilde X_x\rho\tilde X_x/p(x|\rho)$ defined on the extended Hilbert space.
Hence, $C_1( X|\rho)\leq H(\rho)$ for all observables $X$ (including $X=Y$), with equality when $\tilde X$ is rank-1.  But for a rank-1 observable $X$ one can always choose $\tilde X$ to be rank-1. In particular, expressing the projection $\tilde X_x$ as an orthogonal sum of rank-1 projections, $\tilde X_x=\sum_k |x,k\rangle\langle x,k|$, the Naimark extension property $X_x=P\tilde X_xP$ in Eq.~(\ref{naimark})  requires that $P|x,k\rangle=0$ for all but one value of $k$, $k=k_x$ say, implying one can replace $\tilde X$ by a rank-1 projective observable $\tilde X'$ on the extended Hilbert space $\tilde H'$ generated by the span of $\{|x,k_x\rangle\}$,  with $\tilde X'_x:=|x,k_x\rangle\langle x,k_x|$.
Hence, making such a choice, it follows that $C_1(X|\rho)=H(\rho)\geq C_1(Y|\rho)$.  Substitution into Theorem~\ref{thm3} with $\alpha=\beta=1$ then gives the Corollary as desired.
\end{proof}

Corollary~\ref{cor3} generalises  the known  uncertainty relation for Shannon entropies of rank-1  projective observables in Eq.~(\ref{nondegen}) of the Introduction~\cite{Berta, kor}, and is given in Eq.~(71) of~\cite{Coles2011} under the assumption of a finite system Hilbert space.  Note that $H(\rho)$ appears as the measure of mixedness in this case because it is an upper bound for classicality when $\alpha=\beta=1$ and is saturated for rank-1 observables.

Finally, Theorem~\ref{thm3} also yields a strong lower bound for the R\'enyi entropy of discrete observables that takes mixedness into account.

\begin{corollary} \label{cor4}
For an arbitrary discrete observable $X$ with POVM $\{X_x\}$ one has the lower bound
\beq
H_\alpha(X|\rho) \geq  -\log \max_x \lambda_{\max}(X_x) + C_{\frac{\alpha}{2\alpha-1}}(X|\rho) ,
\eeq
for R\'enyi entropy.
\end{corollary}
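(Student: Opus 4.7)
The plan is to derive this as an almost immediate corollary of Theorem~\ref{thm3} by choosing $Y$ to be the trivial observable. Specifically, I would take $Y$ to correspond to the one-element POVM $\{Y_1 = \id\}$, which was exactly the device used to deduce Corollary~\ref{corinsert} from Theorem~\ref{thm2}.

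First, I would compute each ingredient of Theorem~\ref{thm3} under this choice. Since $Y$ has a deterministic outcome with probability one, the R\'enyi entropy $H_\beta(Y|\rho)$ vanishes identically. The overlap coefficient simplifies to
$$
\mu_{XY} = \max_{x} \lambda_{\max}\bl X_x^{1/2}\,\id\,X_x^{1/2}\br = \max_x \lambda_{\max}(X_x),
$$
matching the first term on the right-hand side of the claimed inequality. The R\'enyi parameters in Theorem~\ref{thm3} are linked by $1/\alpha + 1/\beta = 2$, so $\beta = \alpha/(2\alpha-1)$, which accounts for the particular subscript appearing on the classicality term in the Corollary.

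Next I would dispose of the $\max$ in Theorem~\ref{thm3} by observing that $C_\alpha(Y|\rho) = 0$ for the trivial observable. This is immediate from decomposition~(\ref{renyidecomp}): $H_\beta(Y|\rho) = Q_\alpha(Y|\rho) + C_\alpha(Y|\rho) = 0$, and both terms are nonnegative (the quantumness by assumption~(\ref{q0}) together with~(\ref{assump2}), the classicality by construction in~(iii) of Sec.~\ref{sec:decompgen}). Hence $\max\{C_\alpha(Y|\rho), C_\beta(X|\rho)\} = C_\beta(X|\rho) = C_{\alpha/(2\alpha-1)}(X|\rho)$.

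Assembling these observations, Theorem~\ref{thm3} reduces to
$$
H_\alpha(X|\rho) \geq -\log \max_x \lambda_{\max}(X_x) + C_{\alpha/(2\alpha-1)}(X|\rho),
$$
which is the desired bound. There is no real obstacle here; the only point that might warrant a line of justification is the reduction of the $\max$, which follows purely from the nonnegativity of the classicality term for the trivial observable. The proof is correspondingly brief and parallels the derivation of Corollary~\ref{corinsert} from Theorem~\ref{thm2}.
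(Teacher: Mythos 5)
Your proposal is correct and follows essentially the same route as the paper: choose $Y$ to be the trivial observable in Theorem~\ref{thm3}, note $H_\beta(Y|\rho)=0$ and $\mu_{XY}=\max_x\lambda_{\max}(X_x)$, and dispose of the $\max$ by observing $0\leq C_\alpha(1|\rho)\leq H_\beta(1|\rho)=0$ via decomposition~(\ref{renyidecomp}). The paper additionally remarks that the result can be rederived from Corollary~\ref{corinsert}, but this is only an alternative, not a gap in your argument.
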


\begin{proof} Choose $Y=1$ in Theorem~\ref{thm3}, and note that  $0\leq C_\alpha(1|\rho)\leq H_\beta(1|\rho)=0$ via quantum-classical decomposition~(\ref{renyidecomp}) (which is valid for arbitrary observables as discussed in Sec.~\ref{sec:nonproj}). The Corollary also follows directly from the asymmetry lower bound in Corollary~\ref{corinsert},  again using Eq.~(\ref{renyidecomp}) (but with $\alpha$ and $\beta$ swapped).
\end{proof}

For the case of a rank-1 observable $X$ and $\alpha=1$ one has $C_1(X|\rho)=H(\rho)$ as per the proof of Corollary~\ref{cor3}, and  the lower bound in Corollary~\ref{cor4} reduces to 
\beq \label{reduced}
H(X|\rho) \geq  -\log \max_x \langle x|x\rangle + H(\rho)
\eeq
This also corresponds to choosing $Y=1$ in Corollary~\ref{cor3}, and is stronger than the alternative choice $Y=X$, which replaces $H(\rho)$ by $\half H(\rho)$. \blk For rank-1 projective observables (i.e, $\langle x|x\rangle\equiv1$), it reduces to the known property that the Shannon entropy of such observables is never less than the von Neumann entropy (since $p(x|\rho)=\sum_jS_{xj}p_j$ where $S_{xj}:= |\langle x|\psi_j\rangle|^2$ is a doubly stochastic matrix for any orthogonal decomposition $\rho= \sum_jp_j|\psi_j\rangle\langle\psi_j|$ of $\rho$).

 It is of interest to compare Corollary~\ref{cor4}, which gives a lower bound for R\'enyi entropy that depends on the sharpness of $X$ and the mixedness of $\rho$, with classical lower bound~(\ref{mono}).  As a simple example, consider the projective qubit observable $X=\bm\sigma\cdot \bm n$ for spin direction $\bm n$, and state $\rho=\half(\id+\bm \sigma \cdot\bm r)$ with Bloch vector $\bm r$. Corollary~\ref{cor4} then yields the lower bound
\beq \label{cor5ex}
H(X|\rho) \geq H(\rho) = -\frac{1+r}{2}\log \frac{1+r}{2} - \frac{1-r}{2}\log \frac{1-r}{2} ,
\eeq
 for the Shannon measurement entropy, with $r:=|\bm r|$.  In contrast, the corresponding classical lower bound in Eq.~(\ref{mono}) gives
 \beq \label{cor5ex2}
 H(X|\rho) \geq \log \frac{2}{1+ |\bm  n\cdot \bm r|} = \log \frac{2}{1+r|\cos\theta_{\bm n,\bm r}|},
 \eeq
 where $\theta_{\bm n,\bm r}$ is the angle between $\bm n$ and $\bm r$. Both lower bounds range from 0 for  $\bm r=\pm \bm n$ up to $\log 2$ for $\bm r=\bm 0$. However, as depicted in Fig.~\ref{fig1}, the bound in Eq.~(\ref{cor5ex}) is always stronger for sufficiently mixed states (as is easily proved noting that Eq.~(\ref{cor5ex}) is concave in $r$ and Eq.~(\ref{cor5ex2}) is convex in $r$, with equality at $r=0$). 

\begin{figure}[!t] 
	\includegraphics[width=0.45\textwidth]{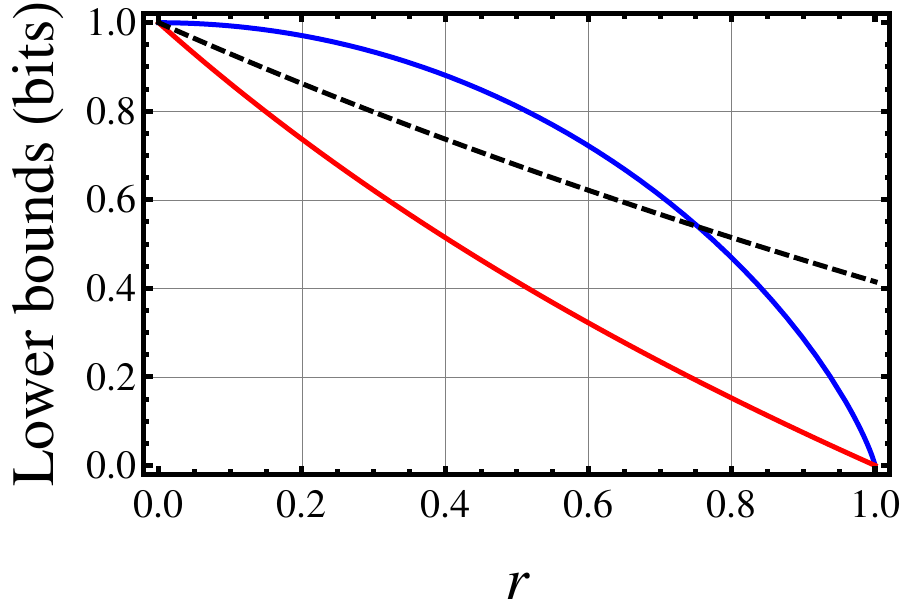}
	\caption{Lower bounds on measurement entropy for qubit states. The lower bound in Eq.~(\ref{cor5ex}) for qubit observable $X=\bm \sigma\cdot\bm n$, following from Corollary~\ref{cor4}, is plotted as a function of the Bloch vector length $r$ (upper solid blue curve), while the classical lower bound~(\ref{cor5ex2}) is plotted for the cases $\theta_{\bm n, \bm r}=0$, i.e., $\bm n\cdot \bm r=r$ (lower solid red curve), and 
	$\theta_{\bm n, \bm r}=\frac{\pi}{3}$, i.e., $\bm n\cdot\bm r=\half r$ (dashed black curve). Corresponding plots of the lower bounds for the measurement entropy of the trine observable $T$ in Eqs.~(\ref{trine2}) and~(\ref{trine}) may be obtained by adding $\log\frac{3}{2}$ to each lower bound. For both observables the classical lower bounds are seen to be weaker than the bound from Corollary~\ref{cor4} for sufficiently mixed states, as discussed more generally in the main text.}
	\label{fig1}
\end{figure}

 As an example for the case where $X$ is a nonprojective observable, consider the qubit trine observable $X=T$ with POVM elements $T_j=\tfrac{1}{3}(\id+\bm \sigma\cdot\bm m^{j})$, where $\bm m^{j}$ denotes the unit vector $(\cos(2\pi j/3),\sin(2\pi j/3),0)$ for $j=0,1,2$~\cite{peres,Holevo1973,foottrine}. Corollary~\ref{cor4} then gives the  lower bound
 \beq \label{trine2}
 H(T|\rho) \geq \log \frac{3}{2} + H(\rho) ,
 \eeq
 whereas the classical lower bound~(\ref{mono}) gives
 \beq \label{trine}
 H_\alpha(T|\rho) \geq \log \frac{3}{2} +\log \frac{2}{1+r\max_j \cos\theta_{\bm m^{j},\bm r}},
 \eeq 
where $\theta_{\bm n,\bm r}$ is the angle between $\bm m^{j}$ and $\bm r$. While both lower bounds range from a minimum value of $\log\tfrac{3}{2}$ for $\bm r=\bm m^{j}$, up to the maximum possible value of  $H_\alpha(T|\rho)=\log 3$ for  $\bm r$ orthogonal to the plane of trine directions, the bound in Eq.~(\ref{trine2}) is always stronger for sufficiently mixed states, as depicted in Fig.~\ref{fig1} (with the common term $\log\tfrac{3}{2}$ subtracted). For example, if $\bm r$ is  anti-aligned with one of the trine directions, then $\cos\theta_{\bm m^j,\bm r}$ is maximised for $\theta_{\bm m^j,\bm r}=\tfrac{\pi}{3}$ and Corollary~\ref{cor4} outperforms classical lower bound~\ref{mono} (and hence also Corollary~\ref{cor1}) for $r\lesssim 0.753$, corresponding to the intersection of the upper solid and dashed curves in Fig.~\ref{fig1}. 
\blk

More generally, for the case of an arbitrary observable $X$ and $\alpha=1$, Corollary~\ref{cor4} reduces to the lower bound
\begin{align}
H(X|\rho) &\geq -\log \max_x \lambda_{\max}(X_x)+ H(\rho)\nn\\
&\qquad -\sum_x p(x|\rho)H(\tilde\rho_x),
\end{align}
for  Shannon entropy, with $\tilde\rho_x:=\tilde X_x\rho\tilde X_x/p(x|\rho)$, for any  Naimark  extension $\tilde X$  of $X$ as per Eq.~(\ref{naimark}) (see also the proof of Corollary~\ref{cor3}). This case is intriguingly similar in form to the lower bound
\begin{align}
H(X|\rho_a) &\geq -\log \max_x \lambda_{\max}(X_x)+H(\rho_b)\nn\\
&\qquad -\sum_x p(x|\rho) H(\rho_{b|x})
\end{align}
for Shannon entropy in Eq.~(65) of~\cite{Coles2011} for any joint state $\rho_{ab}$, with $\rho_{b|x}:={\rm tr}_a[\rho_{ab}X_x]/p(x|\rho_a)$. It would be of interest to explore the connection between these bounds in future work. This is nontrivial given that $\rho_b=\sum_x p(x|\rho_a)\rho_{b|x}$ whereas $\rho\neq\sum_xp(x|\rho)\tilde\rho_x$. The key may lie in the related asymmetry lower bound given in Corollary~\ref{corinsert}.

\section{Discussion}
\label{sec:con}

The results weave two main threads together. The first is a general approach to quantum-classical decompositions of resource measures for observables (and sets of operators), based on the idea that pure states are the most resourceful (Sec.~\ref{sec:decompgen}). The quantum contribution to the resource measure arises from the noncommutativity or incompatibility of observables with the state, and the classical contribution from the mixedness of the state.  The approach unifies previous decompositions of variance and entropy, and generalises them to the decomposition of the symmetrised covariance matrix with respect to measures of quantum Fisher information, and the decomposition of Shannon entropy with respect to the conditional entropy of self-dual communication channels (Secs.~\ref{sec:skew}, \ref{sec:selfdual} and Appendix~\ref{appa:selfdual}), and is applied to R\'enyi entropy in particular (Sec.~\ref{sec:asymm}). In the latter case the quantum contribution is provided by the R\'enyi asymmetry, with its connection to R\'enyi entropy arising from duality relation~(\ref{asymmgen}) for pure states.

The second thread is the generalisation and unification of measures of quantumness and asymmetry to nonprojective observables (Sec.~\ref{sec:nonproj}), and to sets, groups and algebras of operators and to quantum channels (Sec.~\ref{sec:chann}). The R\'enyi asymmetry is a particular example of interest, as it generalises the standard asymmetry measure based on relative entropy and has applications in contexts as diverse as coherence, information, uncertainty relations, metrology, and open quantum systems.

The weaving of the two threads is in two natural stages. In the first stage, tradeoff relations between R\'enyi asymmetry and R\'enyi entropy are obtained, as per Theorems~\ref{thm1} and~\ref{thm2} (Sec.~\ref{sec:tradeoff}).  These have relatively elementary derivations, based on data processing inequality~(\ref{data}) for sandwiched R\'enyi divergences, and significantly generalise known relations~(\ref{conj}) and~(\ref{number}) for the case of conjugate observables. Corollaries~\ref{cor1}--\ref{cor2} are simple applications of these Theorems that yield  lower bounds for R\'enyi asymmetry and R\'enyi entropy, and recover previously known uncertainty relation~(\ref{renknown}) for the case of Shannon entropy.  All these results are independent of any consideration of quantum-classical decompositions.

It is in the second stage that the tradeoff relations for R\'enyi asymmetry are combined with the quantum-classical decomposition of R\'enyi entropy in Eq.~(\ref{renyidecomp}), to obtain the strong uncertainty relation for R\'enyi entropies in Theorem~\ref{thm3} (Sec.~\ref{sec:strong}).  This relation is valid for both projective and nonprojective observables, and improves on known uncertainty relation~(\ref{renknown}) by taking the mixedness of the system state into account via the classical contribution to R\'enyi entropy. Corollary~\ref{cor3} recovers a general result from Theorem~\ref{thm3} for the case of Shannon entropies, using the property that the classical contribution is bounded by the von Neumann entropy of the system for this case, while Corollary~\ref{cor4}  gives a strong bound for R\'enyi entropy that improves on
Corollary~\ref{cor1} and classical lower bound~(\ref{mono}) for sufficiently mixed states.
 
It is worthwhile remarking here on a point of terminology. The term `quantumness' has been used for the measure of noncommutativity $Q(X,Y,\dots|\rho)$ introduced in Sec.~\ref{sec:decompgen}, given its fundamental role as the inherently quantum contribution to uncertainty and other resource measures~\cite{Luo2005,LuoPRA2005,kor}. Note, however, that the nonvanishing of a quantum commutator has a direct classical analogue: the nonvanishing of a classical Poisson bracket. Hence, one can define classical analogues of asymmetry and the like (e.g., as a measure of the distance between a given phase space distribution and the set of phase space distributions invariant under canonical transformations generated by a given classical observable). Nevertheless, this does not extend to give a classical analogue of the approach to quantum-classical decompositions in Sec.~\ref{sec:decompgen}. In particular, this approach relies on the purification of a given quantum state on a larger Hilbert space, whereas there is no analogous purification of classical phase space distributions to a delta-distribution on a larger phase space.  Hence, the term `quantumness' may be justified as reflecting an essentially quantum feature in the context of quantum-classical decompositions.
 
There are various possible directions for future work, including the following. First, it is known that the bound $\mu_{XY}$ in known uncertainty relation~(\ref{renknown}) can be improved using majorisation properties of Schur-convex functions such as R\'enyi entropy~\cite{Fried2013,Ziggy2013}. This suggests looking for similar improvements to the lower bounds in Theorems~\ref{thm1}--\ref{thm3}. Perhaps, for example, the argument based on a monotonic decreasing function in the proof of Theorem~\ref{thm1} can be strengthened via a majorisation argument.

Second, it is of interest to seek analogous tradeoff relations and uncertainty relations for continuous observables such as position and momentum. Here the main technical issue is that the kets appearing in the spectral decompositions of such observables are nonnormalisable, so that the analogue of  $\sigma_X$ in Eq.~(\ref{sigmax}) is not a density operator. However,  it may be possible to take suitable limits of discrete observables to obtain results for this case, as has been done previously for Shannon entropic uncertainty relations~\cite{HallNJP,Hallholevo}.

Third,  whereas the measures of quantumness considered in the paper are focused on measures of uncertainty and asymmetry, consideration of other resources measures such as entanglement and Bell nonlocality, and related quantum-classical decompositions, is also a possible avenue of exploration. One might, for example, maximise some measure $Q(X_1\otimes X_2,Y_1\otimes Y_2,\dots|\rho)$ or $u(X_1\otimes X_2,Y_1\otimes Y_2,\dots|\psi)$ over sets of local observables to obtain a measure of bipartite entanglement for state $\rho$ or $\psi$, and apply the approach in Sec.~\ref{sec:decompgen} or the alternative approach in Appendix~\ref{appa:roof} to obtain corresponding quantum-classical decompositions. It would also be of interest to extend Proposition~\ref{pro2}  for the  standard asymmetry measure,  to sets $S$ of two or more (noncommuting) operators on an infinite Hilbert space, for the case that the group of unitary operators on the double commutant $S''$ is noncompact.

Fourth and finally, it should be possible to extend or relate the strong uncertainty relation in Theorem~\ref{thm3} to the presence of quantum memory or quantum side information, in which uncertainties are conditioned on systems with which the system of interest is correlated~\cite{Berta, Coles2011}. For example, the general method given by Coles {\it et al.} in~\cite{Coles2012} is applicable to R\'enyi conditional entropies of the form $H_\alpha(a|b):=-\inf_{\sigma_b}D_\alpha(\rho_{ab}\|\id_a\otimes\sigma_b)$, noting that the sandwiched R\'enyi divergence~(\ref{dalpha}) satisfies the required properties for $\alpha\geq\half$. Hence, noting further the duality property $H_\alpha(a|b)+H_\beta(a|c)=0$ for $1/\alpha+1/\beta=2$ and any pure state $\rho_{abc}$~\cite{Muller2013,Beigi2013}, Theorem~1 of~\cite{Coles2012} yields an uncertainty relation of the form 
\beq
H_\alpha(X|b) + H_\beta(Y|c) \geq -\log \mu_{XY},~~~\frac{1}{\alpha}+\frac{1}{\beta}=2,
\eeq
where $X$ and $Y$ are observables on the first component of a tripartite state $\rho_{abc}$,  $\rho_{Xb}:=\oplus_x {\rm Tr}_{ac}[\rho_{abc}X_x\otimes1_b\otimes1_c]\equiv\varphi_X(\rho_{ab})$, and $\rho_{Yc}:=\oplus_y {\rm Tr}_{ab}[\rho_{abc}Y_y\otimes1_b\otimes1_c]\equiv\varphi_Y(\rho_{ac})$ (see also Theorem~11 of~\cite{Muller2013}). It would be of interest to find a connection between quantum memory and the classical contribution to R\'enyi entropy via a comparison of this uncertainty relation with Theorem~\ref{thm3} (see also the related discussion at the end of Sec.~\ref{sec:strong}).

\acknowledgments
I am grateful to an anonymous referee for suggesting comparing Corollary~\ref{cor1} with classical lower bound~(\ref{mono}).
Having been recently diagnosed with pancreatic cancer, this may be one of my last physics papers. I therefore take this opportunity to thank the many colleagues I have interacted with over the years, both locally and nonlocally  :-) 

\appendix

\section{FURTHER EXAMPLES OF QUANTUM-CLASSICAL DECOMPOSITIONS}
\label{appa}

\subsection{Decomposition of Shannon entropy via a quantumness measure for self-dual channels}
\label{appa:selfdual}

A quantum-classical decomposition of Shannon entropy is obtained here that is quite different  to the decomposition of Korzekwa {\it et al.}~\cite{kor} discussed in Sec.~\ref{sec:selfdual}, based on a choice of quantumness measure that arises in the context of quantum information channels.

In particular, consider the communication channel defined by measuring observable $X$ with POVM $\{X_x\}$ at the receiver, on members of an ensemble of signal states with average density operator $\rho$,  for the case of the `dual' ensemble $\ens_{X,\rho}:=\{p_x;\rho_x\}$ defined by~\cite{Hugh93,Hall1997}
\beq \label{dual}
p_x:= \tr{\rho X_x},\qquad \rho_x:=\frac{\rho^{1/2}X_x\rho^{1/2}}{\tr{\rho X_x}} 
\eeq
(note that $\rho=\sum_x p_x\rho_x$, and the choice of $\rho_x$ is left arbitrary if $p_x=0$). 
This `self-dual' channel is invariant under source duality, i.e., interchange of the  the signal ensemble and the receiver measurement~\cite{Hugh93, Hall1997}. It provides a strong lower bound for the maximum information obtainable from a measurement of $X$ on an ensemble with density operator $\rho$~\cite{Hall1997}, and is closely related via the inverse map $p_x\rho_x\rightarrow X_x$ to the `pretty good measurement' used in the proof of the Holevo-Schumacher-Westmoreland theorem for the classical capacity of a quantum channel~\cite{nielsen}.  

Each measurement result $x$ of the self-dual channel gives some information about which signal state $\rho_{x'}$ was transmitted, with an average uncertainty quantified by the Shannon conditional entropy~\cite{nielsen},
\beq \label{condit}
H(X'|X) := H(X,X') - H(X) = H(X') - I(X:\ens_{X,\rho}) .
\eeq
Here $I(X:\ens_{X,\rho})$ is the mutual information of the channel, and $H(X')=-\sum_{x'} p_{x'}\log p_{x'}=H(X|\rho)$ is the entropy of the signal-state distribution $p_{x'}$ in Eq.~(\ref{dual}). Note that the joint distribution of $X$ and $X'$ for state $\rho$ follows via Eq.~(\ref{dual}) as
\beq \label{pxx}
p(x,x'|\rho):= \tr{\rho^{1/2}X_{x'}\rho^{1/2}X_{x}}.
\eeq

Now, if $X$ is a projective observable, then the conditional entropy of the self-dual channel provides a measure of quantumness satisfying assumption~(\ref{q0}) of Sec.~\ref{sec:decompgen} (since $[X,\rho]=0$ implies the signal states in Eq.~(\ref{dual})  are orthogonal, so that one has a noiseless classical channel with $H(X,X')=H(X)$). 
Further, assumptions~(\ref{assump1}) and~(\ref{assump2}) are also satisfied  (noting for a pure state $\rho=|\psi\rangle\langle\psi|$ that the signal states are identical and hence carry no information). Hence, noting $p(x,x')$ in Eq.~(\ref{pxx}) is invariant under Naimark extension~(\ref{naimark}) (since $P\rho P=\rho$), it follows that the conditional entropy of the self-dual channel in Eq.~(\ref{condit}) is a measure of quantumness for both projective and nonprojective observables, corresponding to 
\beq \label{qsd}
Q_{\rm sd}(X|\rho):= H(X|\rho) - I(X:\ens_{X,\rho}) .
\eeq
Recalling from~Eq.~(\ref{dual}) that the signal states are identical and carry no information when $\rho$ is pure, it follows $M_{\rm sd}(X|\rho)=H(\tilde X\otimes\id_a|\rho_\psi)=H(X|\rho)$ for any Naimark extension $\tilde X$ of $X$ and purification $\rho_\psi$ of $\rho$, and the approach of Sec.~\ref{sec:decompgen}  then yields  the   quantum-classical decomposition 
\beq \label{sd}
M_{sd}(X|\rho):= H(X|\rho) = Q_{\rm sd}(X|\rho) + C_{\rm sd}(X|\rho) 
\eeq
of Shannon entropy, with the classical component given by the mutual information of the self-dual channel, i.e.,
\beq 
C_{\rm sd}(X|\rho):=I(X:\ens_{X,\rho}) = 2H(X|\rho) -  H(X,X'|\rho) .
\eeq
Here the second equality follows by direct calculation~\cite{Hall1997}, where the last term is the Shannon entropy of the joint probability distribution in Eq.~(\ref{pxx}).

The decompositions of Shannon entropy in Eqs.~(\ref{qd}) and~(\ref{sd}) are seen to be rather different, although by construction they agree for projective observables when $[\rho,X]=0$ or $\rho=|\psi\rangle\langle\psi|$. It would be of interest to compare these decompositions in more detail elsewhere.

\subsection{An alternative roof-based approach to quantum-classical decompositions}
\label{appa:roof}

The method of constructing quantum-classical decompositions in Sec.~\ref{sec:decompgen} is guided by the level of accessibility, and hence usefulness, of a given quantum resource. In particular, the classical component arises due to lack of access to a purified state of the system. Here an alternative approach is briefly described, based on the direct decomposition of the system state into a mixture of pure states.  The approach is conceptually simple, but appears to be less general and technically more challenging than the approach given in the main text.

The starting point is a function $u(X,Y,\dots,|\psi)$ that quantifies a quantum resource for the pure states of a given system.  It is natural to consider such functions in various contexts, e.g., the entropy of the Schmidt coefficients for an entangled pure state.  The function $u$ is assumed to satisfies two minimal properties:
	\begin{itemize}
	\item[(a)] No `quantumness' when the system is in a simultaneous eigenstate of  observables $X,Y,\dots$, i.e., 
	\beq \label{prop1}
\qquad	u(X,Y,\dots|\psi)= 0~~{\rm if}~~ X\psi=x\psi,~Y\psi=y\psi, \dots ;
	\eeq
		\item[(b)] Invariance under unitary transformations, i.e.,
		\beq \label{prop2}
		\qquad u(UXU^\dagger,UYU^\dagger, \dots|U\psi) = u(X,Y,\dots|\psi)
		\eeq
		for any unitary transformation $U$.
\end{itemize}	
	
Now, let $\p_\rho$ denote the set of pure-state decompositions of density operator $\rho$, i.e.,
\beq
\p_\rho:=\bigg\{ \{p_j;\psi_j\}: p_j\geq0, \sum_j p_j|\psi_j\rangle\langle\psi_j| =\rho\bigg\},
\eeq
where $|\psi\rangle\langle\psi|$ denotes the density operator corresponding to $\psi$. The quantum-classical decomposition corresponding to function $u$ is then defined via
\begin{align} \label{def1}
	M(X,Y,\dots|\rho)&:= \sup_{\{p_j;\psi_j\}\in\p_\rho} \sum_j p_j  u(X,Y,\dots|\psi_j ) ,
	\end{align}
\begin{align}
	\label{def2}
	Q(X,Y,\dots|\rho)&:= \inf_{\{p_j;\psi_j\}\in\p_\rho} \sum_j p_j  u(X,Y,\dots|\psi_j) , 
	\end{align}
\begin{align}
	\label{def3}
		C(X,Y,\dots|\rho)&:=	M(X,Y,\dots|\rho)-	Q(X,Y,\dots|\rho).
\end{align}
Thus, $M$ and $Q$ are the concave and convex roofs of $u$, with the classical component given by their gap.

It is straightforward to check that the above construction satisfies the basic requirements discussed in Sec.~\ref{sec:decompgen}. Assumption~(\ref{q0}) follows from property~(\ref{prop1}), noting that commutativity implies there is a pure-state decomposition into simultaneous eigenstates, while assumptions~{\ref{assump1} and~(\ref{assump2}) are  direct consequences of property~(\ref{prop2}) and definitions~(\ref{def1})--(\ref{def3}), respectively. It is seen that the classical mixing contribution arises in this approach directly via the mixing of pure states, in contrast to the approach in Sec.~\ref{sec:decomp} in which the mixing is generated via a partial trace.

\subsubsection{Example: variance and minimal Fisher information}

As a first example, choose $u$ to be the variance of a projective observable, i.e.,
\beq
u_V(X|\psi):= {\rm Var}_\psi X = \langle X^2\rangle_\psi - \langle X\rangle_\psi^2.
\eeq
The concave and convex roofs of this function are given in~\cite{Toth} and~\cite{Yu}, respectively (see also~\cite{Toth2022}), and the above `roof' construction then yields the corresponding quantum-classical decomposition 
\beq
M_V(X|\rho) = {\rm Var}_\rho X = Q_V(X|\rho) + C_V(X|\rho),
\eeq
where
\beq
Q_V(X|\rho) = \half \sum_{j,k} \frac{|\langle j|[X,\rho]|k\rangle|^2}{\langle j|\rho|j\rangle +\langle k|\rho|k\rangle }.
\eeq
Here $\{|j\rangle\}$ is the set of eigenstates of $\rho$ and the summation runs over the eigenstates that generate nonzero denominators. The summation may be recognised, up to a constant factor, as the minimal quantum Fisher information, corresponding to the symmetric logarithmic derivative~\cite{Petz,Toth,Yu}, and hence this decomposition of the variance is a special case of the decomposition of the covariance matrix in Eq.~(\ref{covdecomp}) of Sec.~\ref{sec:decompex}. Thus, the 'roof' construction method picks out a preferred quantum Fisher information for the decomposition, in comparison to the method in the main text.

\subsubsection{Example: Shannon entropy and informational power}
	
More generally, the convex and concave roofs of a given function will  not  be amenable to analytic calculation, making this approach reliant on numerical methods. Consider, for example, the case where the function $u$ is the Shannon entropy of  observable $X$, i.e.,  $u_H(X|\psi):= H(X|\psi) = -\sum_x \langle\psi|X_x|\psi\rangle\log\langle\psi|X_x|\psi\rangle$. The roof construction then  gives the decomposition
\beq
M_H(X|\rho) = Q_H(X|\rho) + C_H(\rho) 
\eeq
with
\begin{align}
	M_H(X|\rho)&:= \sup_{\{p_j;\psi_j\}\in\p_\rho} \sum_j p_j H(X|\psi_j)\nn\\
	&= H(X|\rho) - \inf_{\ens\in\p_\rho} I(X:\ens) 
\end{align}
\begin{align}
	Q_H(X|\rho)&= \inf_{\{p_j;\psi_j\}\in\p_\rho} \sum_j p_j H(X|\psi_j)\nn\\
	&= H(X|\rho) - \sup_{\ens\in\p_\rho} I(X:\ens) ,
\end{align}
\begin{align}
	C_H(X|\rho)&:=\sup_{\ens\in\p_\rho} I(X:\ens)  - \inf_{\ens\in\p_\rho} I(X:\ens) ,
\end{align}
where $I(X:\ens)$ is the Shannon mutual information for observable $X$ and signal ensemble $\ens$. Hence the quantum component is directly related to the maximum mutual information of the observable over the set of pure-state signal ensembles for density operator $\rho$, also called the `informational power' of the observable for state $\rho$~\cite{infpower}, while the classical component is given by the gap between the maximum and minimum possible mutual informations.

 In general there are few analytic results for the maximum and minimum mutual informations of a pure state channel with fixed $X$ and $\rho$, and so the mutual information  must be maximised or minimised numerically. However, upper and lower bounds are provided by choosing particular pure state ensembles in $\p_\rho$, such as the Scrooge ensemble~\cite{Scrooge} or (for rank-1 observables) the self-dual ensemble in Eq.~(\ref{dual}) above, which can be used as estimates of the minimum and maximum values, respectively. The von Neumann entropy $H(\rho)$ also provides an estimate of the maximum value via the Holevo bound for mutual information~\cite{nielsen}.   Hence, for example, for a rank-1 observable $X$ one has the inequalities
 \beq
 H(X|\rho) - H(\rho) \leq Q_H(X|\rho) \leq Q_{\rm sd}(X|\rho),
 \eeq
 \beq
Q_{\rm sd}(X|\rho) - q(\rho) \leq   C_H(X|\rho) \leq H(\rho) ,
 \eeq
 where $Q_{\rm sd}(X|\rho)$ is the conditional entropy of the self-dual channel in Eqs.~(\ref{condit}) and~(\ref{qsd}), and $q(\rho)$ is the subentropy of $\rho$, i.e., the entropy of the Scrooge ensemble for $\rho$~\cite{Scrooge}.

\end{document}